\documentclass[sigconf,nonacm]{acmart}

\setcopyright{none}
\AtBeginDocument{%
  }

\newcommand{\Fp}{\mathbb{F}_p}

\newcommand{\Obl}{\mathrm{Obl}}

\usepackage{enumitem}

\usepackage{tikz}
\usetikzlibrary{positioning,arrows.meta,calc,backgrounds,decorations.pathreplacing}

\newcounter{algorithm}
\newcounter{algoline}[algorithm]
\newcommand{\algcaption}[1]{%
  \refstepcounter{algorithm}%
  \noindent\textbf{Algorithm~\thealgorithm.}~#1\par\smallskip%
}
\newcommand{\Require}{\par\noindent\textbf{Input:}\ \ignorespaces}
\newcommand{\Ensure}{\par\noindent\textbf{Output:}\ \ignorespaces}
\newcommand{\State}{%
  \par\refstepcounter{algoline}%
  \noindent\makebox[2.4em][l]{\footnotesize\arabic{algoline}:}\ignorespaces}
\newcommand{\Return}{\textbf{return}\ }

\newcommand{\reved}[1]{#1}

\newcommand{\F}{Fig.}
\newcommand{\Def}{Def.}
\newcommand{\Alg}{Alg.}

\newcommand{\parh}[1]{\smallskip\noindent\textbf{#1}}

\newsavebox{\rqanswerbox}
\newenvironment{rqanswer}{%
  \par\smallskip
  \begin{lrbox}{\rqanswerbox}%
  \begin{minipage}{\dimexpr\columnwidth-2\fboxsep-2\fboxrule\relax}%
}{%
  \end{minipage}%
  \end{lrbox}%
  \noindent\fbox{\usebox{\rqanswerbox}}%
  \par\smallskip
}

\theoremstyle{definition}
\newtheorem{definition}{Definition}

\begin{document}

\title{Sound Debloating of Redundant Checks in Zero-Knowledge Machine-Learning Circuits}
\titlenote{This is the authors' version for ACM CCS~2026, with additional appendices.
Conference version: Zhantong Xue, Pingchuan Ma, Zhaoyu Wang, Yuguang Zhou,
Huaijin Wang, and Shuai Wang. 2026. Sound Debloating of Redundant Checks in
Zero-Knowledge Machine-Learning Circuits. In \emph{Proceedings of the 2026
ACM SIGSAC Conference on Computer and Communications Security (CCS~'26)},
November 15--19, 2026, The Hague, Netherlands. ACM. Forthcoming.
Assigned DOI: \url{https://doi.org/10.1145/3830454.3846774}.
\copyright~2026 the authors. This manuscript is licensed under
\href{https://creativecommons.org/licenses/by/4.0/}{CC BY 4.0}.}

\author{Zhantong Xue}
\orcid{0009-0000-3419-9431}
\affiliation{%
  \institution{Hong Kong University of Science and Technology}
  \city{Hong Kong}
  \country{China}
}
\email{zxueai@cse.ust.hk}

\author{Pingchuan Ma}
\authornote{Corresponding author.}
\orcid{0000-0001-7680-2817}
\affiliation{%
  \institution{Zhejiang University of Technology}
  \city{Hangzhou}
  \country{China}
}
\email{pma@zjut.edu.cn}

\author{Zhaoyu Wang}
\orcid{0009-0009-6892-1264}
\affiliation{%
  \institution{Hong Kong University of Science and Technology}
  \city{Hong Kong}
  \country{China}
}
\email{zwangjz@cse.ust.hk}

\author{Yuguang Zhou}
\orcid{0000-0001-5069-4563}
\affiliation{%
  \institution{Hong Kong University of Science and Technology}
  \city{Hong Kong}
  \country{China}
}
\email{yzhougv@cse.ust.hk}

\author{Huaijin Wang}
\orcid{0000-0002-1066-0331}
\affiliation{%
  \institution{Shandong University}
  \city{Jinan}
  \country{China}
}
\email{huaijinwang@sdu.edu.cn}

\author{Shuai Wang}
\orcid{0000-0002-0866-0308}
\affiliation{%
  \institution{Hong Kong University of Science and Technology}
  \city{Hong Kong}
  \country{China}
}
\email{shuaiw@cse.ust.hk}

\renewcommand{\shortauthors}{Xue et al.}

\begin{abstract}
Zero-knowledge (ZK) proof systems for neural-network inference compile the model
into a system of arithmetic constraints. Many of these constraints are redundant
checks: range proofs, sign lookups, and bit decompositions who are globally
entailed by the rest of the circuit through chains of reasoning that span
distant gadgets. Removing them shrinks the circuit and accelerates proving, but
the removal must be carefully justified: an unsoundly debloated circuit becomes
forgeable, accepting witnesses the original would have rejected and so allowing
a prover to claim, for example, that a neural network produced an output it
never actually computed. Such soundness vulnerabilities are not hypothetical:
under-constrained circuits in deployed ZK systems have enabled attackers to
forge transactions and bypass verification entirely.

We present an automated framework that removes redundant checks while provably
preserving soundness. For each candidate removal, our tool first checks whether
the rest of the circuit, on its own, can still rule out every value the removed
check was excluding. Using whole-circuit abstract interpretation, the analysis
searches for such alternative justifications and records them in a provenance
graph; a check is then removed only when an alternative path through the graph
still derives the facts that it is checking. This ensures that the debloated
circuit opens no new forging strategy to an adversary. We evaluate circuits
spanning MLP, CNN, RNN, and transformer architectures generated by two
production frameworks (ezkl and zkml), with up to 25.3 million constraints. Our
tool removes up to 48.7\% of constraints and reduces prover time by up to
72.8\%, without weakening security.
\end{abstract}

\keywords{zero-knowledge proofs, circuit soundness, constraint
redundancy, machine learning}

\begin{CCSXML}
<ccs2012>
   <concept>
       <concept_id>10002978.10002979</concept_id>
       <concept_desc>Security and privacy~Cryptography</concept_desc>
       <concept_significance>500</concept_significance>
       </concept>
   <concept>
       <concept_id>10003752.10010124.10010138.10010143</concept_id>
       <concept_desc>Theory of computation~Program analysis</concept_desc>
       <concept_significance>500</concept_significance>
       </concept>
 </ccs2012>
\end{CCSXML}

\ccsdesc[500]{Security and privacy~Cryptography}
\ccsdesc[500]{Theory of computation~Program analysis}

\maketitle

\section{Introduction}
\label{sec:intro}


Zero-knowledge (ZK) proof systems~\cite{goldwasser85} let a prover convince a
verifier that a computation was performed correctly without revealing its
inputs. The computation is encoded as \reved{a \emph{constraint system}
$\mathcal{C}$, a quantifier-free conjunction of polynomial equations over a
finite field; the prover's secret \emph{witness} $w$ is a satisfying
assignment, $w \models \mathcal{C}$}. A fast-growing application is
\emph{zero-knowledge machine learning}: \reved{a framework compiles inference of a
trained network into such a constraint system, so that a witness
satisfies $\mathcal{C}$ exactly when it encodes a correct inference}. A
proof of knowledge of such a witness therefore certifies the inference
while revealing neither the input nor the model
weights~\cite{peng2026surveyzeroknowledgeproofbased}, enabling on-chain AI
verification and privacy-preserving model serving.\looseness=-1

Prover cost scales with the number of constraints. Even small models compile to
circuits with tens to hundreds of thousands of constraints, and proving a single
inference can take minutes to hours~\cite{kang2022scalingtrustlessdnninference,
ezkl}. Recent benchmarks report proving overhead of $10{,}000\times$ or more
over plain inference, with hundreds of gigabytes of RAM consumed for
moderate-sized models~\cite{moduluslabs2023costofintelligence}. Shrinking the
constraint system is thus a natural target for reducing this
cost.\looseness=-1

A major cause of this overhead, accounting for up to 48.7\% of constraints in
our evaluation (\S\ref{sec:eval}), is the large number and diversity of
well-formedness \emph{checks} that ZK machine learning (ML) frameworks emit on
intermediate variables, including range checks on activation outputs, sign
lookups for value decompositions, remainder bounds in integer division gadgets,
and bit-decomposition validity proofs~\cite{ezkl, zkml}. \reved{Like assertions in
program verification, checks compute no value; they only restrict the
admissible assignments}, guarding against a malicious prover supplying an
out-of-range witness. A \emph{gadget} is a reusable subcircuit implementing
a single operation, such as an activation or division. Yet composition makes
many of these checks redundant: each gadget is compiled as a self-contained
template carrying every check needed for any input it might receive in
isolation. When the framework wires gadgets together, an upstream check often
constrains a shared variable to a range that already implies a downstream
check on the same variable.\looseness=-1

For example, the constraint emitted by a ReLU gadget bounds its output $z$ to
$[0, R_{\mathsf{relu}}]$, while the constraint from the next layer's range check
independently bounds the same $z$ to $[0, R_{\mathsf{rng}}]$ with
$R_{\mathsf{rng}} \le R_{\mathsf{relu}}$; the upstream constraint still asserts
a valid bound, but that bound is already implied by the tighter downstream one.
Such compositional redundancies pervade ML circuits because the same gadget
templates are instantiated hundreds of times across layers
(\S\ref{sec:motivation}). This raises a natural question:\looseness=-1
\begin{quote}
\emph{Can we automatically identify and remove (i.e., debloat\footnote{Analogous
to software debloating, which removes unused code from binaries.}) these
redundant checks to reduce proving cost, without compromising security?}
\end{quote}

\parh{Debloating \& ZK ML security.}
Answering this requires defining what makes debloating safe. We borrow
``\emph{sound}'' from ZK proof systems, where it forbids any prover from
convincing the verifier of a false statement: \reved{a debloated constraint set
$\mathcal{C}' \subseteq \mathcal{C}$ is \emph{sound} when every witness it accepts is also accepted
by the original ($w \models \mathcal{C}'$ implies $w \models \mathcal{C}$);
equivalently, every removed check $c$ is \emph{entailed} by the survivors,
i.e., a logical consequence of them ($\mathcal{C}' \vdash c$), leaving the
model set unchanged.}
\emph{Unsound} debloating enlarges the accepted-witness set, producing exactly
the \emph{under-constrained} circuits behind real-world ZK
exploits~\cite{aztec2022disclosure, cve-2019-7167}. As \F~\ref{fig:threat-model}
illustrates, this lets a malicious prover supply a forged witness for the model
$\mathcal{M}$ on input $x$, producing a proof~$\pi$ that the verifier still
accepts and claiming inferences $\mathcal{M}$ never produced. Such attacks
evade detection, since $\pi$ is the only artifact the verifier ever sees.
Soundness is therefore critical for any optimisation that modifies the
constraint set.\looseness=-1

\input{figures/threat-model}

\parh{Challenges.} We observe that soundly eliminating redundant checks is
challenging for ZK ML frameworks. First, redundancy is inherently \emph{global}:
a check that looks necessary in a local gadget may be implied by constraints
several layers away, with the implication flowing in either direction across the
circuit, so no per-gadget or single-direction analysis suffices. Second,
redundancy does not compose. Two checks may each be individually redundant yet
unsound to remove together, because each derivation of redundancy uses the
other removed check as a premise; we call this circularity the
\emph{premise-asymmetry} problem (\S\ref{sec:obstacles}). Third, exact
reasoning is intractable: \reved{each redundancy query $\mathcal{C} \setminus \{c\}
\vdash c$ is an entailment in nonlinear field arithmetic (the analogue of
QF\_NIA), already co-NP-hard for a single
constraint (\S\ref{sec:obstacles})}, and the largest sound removal is harder
still.\looseness=-1

\parh{Our solution.} \reved{This \emph{semantic} redundancy, checks that are active
and unique yet entailed by the rest of the circuit, is invisible to existing
optimizers, which decide removals from the syntactic or algebraic form of
constraints alone (\S\ref{sec:prior-opt}); on our benchmarks they remove at
most 0.4\% (\S\ref{sec:eval-rq3}).} We give, to our knowledge, the first
sound and scalable method for this class. The technical core is a
provenance-tracked proof search. We pair \emph{abstract interpretation}~\cite{cousot77}, a
tractable analysis that derives sound \emph{facts} (abstract properties such as
range bounds and known bit patterns) about each variable, with a
\emph{provenance graph}, a proof log recording, for every derived fact, the
constraints and premises that derived it. A check is removed
only when every fact it enforces remains derivable from the remaining
constraints, decided by a proof search over the graph; cycle detection in the
search rules out the circular mutual derivations that premise asymmetry would
otherwise allow. The successful search itself certifies the soundness of each
removal covered by the entailment rules. In our evaluation
(\S\ref{sec:eval}), the tool eliminates 25.5\% of constraints on average (up
to 48.7\%) and reduces proving time by 33.4\% on average (up to 72.8\%).\looseness=-1

\parh{Contributions.} In sum, we make the following contributions:

\begin{itemize}[leftmargin=*]
  \item We identify the well-formedness checks emitted by ZK machine learning
    frameworks as \reved{a systematic source of semantic redundancy untapped by
    existing optimizers}, and formalize the three challenges that make its
    sound removal non-trivial.

  \item We propose an abstract interpretation framework that derives statically
    checkable facts about each variable and tracks the provenance of each fact
    back to the constraints that proved it. The construction is sound: for each
    removed constraint, no malicious prover gains additional power from the
    removal.

  \item We implement the approach in a tool and evaluate it on circuits spanning
    MLP, CNN, RNN, and transformer architectures from two production frameworks.
    The tool eliminates 25.5\% of constraints on average (up to 48.7\%), reduces
    proving time by 33.4\% on average (up to 72.8\%), and scales to circuits
    with up to 25.3 million constraints.
\end{itemize}

\section{Background}
\label{sec:background}

\subsection{Constraints in ZK Proof Systems}

A zero-knowledge proof convinces a verifier that a prover knows a \emph{witness}
satisfying some computation, without revealing the witness~\cite{goldwasser85}.
Modern proof systems make this checkable by compiling the computation into a
\emph{constraint system}~$\mathcal{C}$: a finite conjunction of algebraic
equations over a prime field~$\Fp$. \reved{In logical terms, $\mathcal{C}$ is a
quantifier-free formula, a witness $w$ is a model of it ($w \models
\mathcal{C}$), and the proof certifies
satisfiability without disclosing the model. (The \emph{protocol}'s
cryptographic soundness is orthogonal to this paper: our subject is which
models $\mathcal{C}$ itself admits.)} Both prover time and memory scale with
the number of constraints, so
reducing constraint count is a primary lever for accelerating ZK systems.\looseness=-1

Two representations dominate practice, and our tool supports both. \reved{Both
constrain assignments $w : \mathcal{V} \to \Fp$ over a variable set
$\mathcal{V}$, differing in the shape of an individual constraint.}

\emph{R1CS}~\cite{snarksforc} \reved{states each constraint as a quadratic
identity: a product of two linear forms over $\mathcal{V}$ equals a
third, so every constraint has degree at most two}. It is the native format for
Groth16~\cite{groth16} and front-ends such as
Circom~\cite{belles2023circomacircuitdescription} and
ZoKrates~\cite{zokrates}.\looseness=-1

\emph{PLONKish}, introduced by PLONK~\cite{plonk} and adopted by
halo2~\cite{halo2}, instead lays the variables out in a table whose columns
are partitioned into private \emph{advice} columns (holding witness
variables) and public \emph{fixed} columns (compile-time constants, including
the \emph{selectors} that switch individual constraints on and off per row).
\reved{It has three constraint forms: a \emph{gate} requires a polynomial over
the variables of a row to vanish on every row where its selector is
active; a \emph{lookup} requires a tuple of variables to appear in a
predefined table; and a \emph{copy constraint} equates two cells.} Lookups and custom gates make table-driven nonlinearities cheap, which
is why general-purpose ZK machine learning frameworks~\cite{ezkl, zkml}
target PLONKish.\looseness=-1

\subsection{ZK Machine Learning Frameworks}

Standard ML frameworks such as PyTorch~\cite{pytorch} and
TensorFlow~\cite{tensorflow} execute a model's forward pass over IEEE
floating-point tensors and emit whatever output the computation produces. ZK ML
reframes that same forward pass as a verifiable relation: the prover convinces a
verifier that an input-weight-output triple satisfies $y = f(x; W)$ over $\Fp$,
typically with the weights kept
private~\cite{peng2026surveyzeroknowledgeproofbased}. We study two
general-purpose ZK ML frameworks of this kind: \texttt{ezkl}~\cite{ezkl} and
\texttt{zkml}~\cite{zkml}. Each ingests a trained model (\texttt{ezkl}: ONNX;
\texttt{zkml}: TFLite) and \reved{compiles every layer or operator as a gadget
(\S\ref{sec:intro}): each use emits a fresh constraint set, and the circuit
$\mathcal{C}$ is the union of these per-gadget sets, adjacent gadgets
sharing the variables that carry values between them}.\looseness=-1

Two restrictions distinguish ZK ML from standard ML toolchains: (1)
the constraints live in a prime field $\Fp$ with no IEEE floats, so \reved{the
toolchain \emph{quantizes}: a real value $x$ is represented by the integer
$\lfloor x \cdot 2^s \rceil$ for a scale exponent $s$, integers
enter $\Fp$ through their canonical signed representatives, and every
fixed-point operation is re-expressed over $\Fp$}; and (2) soundness
requires every inference step to be constrained: any operation left
unconstrained lets a malicious prover assign its witness arbitrarily. These
restrictions also change the target of analysis: rather than the real-valued
network of DNN verification (\S\ref{sec:related}), our object is the
field-arithmetic system $\mathcal{C}$ the compiler emits, and the question
shifts from input-output properties to entailment among its
constraints.\looseness=-1

ZK ML frameworks therefore scaffold the model with three classes of
\emph{checking constraints}: \reved{compiler-emitted invariants pinning each witness
variable to the bounded range where field arithmetic agrees with the intended
fixed-point semantics (otherwise wrapped-around field values satisfy the same
equations, admitting unintended models)}. \emph{Range checks}
prove that a witness value lies in a specified
interval, bounding activations to their declared bit-width and constraining the
auxiliary digits introduced by other gadgets. \emph{Value decomposition}
realizes integer division by a constant $B$ for fixed-point rescaling: the
compiler introduces auxiliary witness variables $(q, r)$ with $v = qB + r$ and
$r$ range-checked into $[0, B)$, since $\Fp$ has no native integer division.
\emph{Lookup-table non-linearities} realize ReLU, Sigmoid, Softmax, and similar
functions by precomputing input-output pairs across the quantized domain and
asserting that each pair appears in the table.\looseness=-1

\texttt{ezkl} and \texttt{zkml} differ in how these checks are emitted.
\texttt{ezkl} factors them through shared infrastructure: every range proof
goes through a single range-check service, rescaling decomposes each value into
base-$B$ digits (default $B = 16384$) whose ranges are checked via that same
service, and activation lookups consult their tables directly, with input bounds
inherited from the upstream decomposition. \texttt{zkml} inlines them into each gadget:
range-check lookups are emitted at every call site, rescaling uses a single
division gadget without digit decomposition, and every activation lookup is
preceded by a separate input-range lookup. \texttt{ezkl}'s scaffolding is thus
\emph{factored} across the circuit; \texttt{zkml}'s is \emph{replicated per
gadget}.\looseness=-1

\subsection{Constraint Optimization in ZK Systems}
\label{sec:prior-opt}

Several prior tools reduce or analyze the constraint count of ZK circuits.
Table~\ref{tab:prior-opt} compares them along the dimensions that determine
whether they detect the redundancies our work targets.

\begin{table}[t]
\centering
\caption{Capability comparison of prior tools.}
\label{tab:prior-opt}
\small
\setlength{\tabcolsep}{4pt}
\resizebox{\columnwidth}{!}{%
\begin{tabular}{lcccccc}
\toprule
                    & Compilers  & Distilling    & CirC       & Clap       & Halo2-Analyzer \\
\midrule
Algebraic optimization & \checkmark & \checkmark & \checkmark & \checkmark & $\times$     \\
Syntactic optimization & \checkmark & \checkmark & \checkmark & \checkmark & $\checkmark$    \\
Semantic optimization & $\times$    & $\times$    & $\times$    & $\times$   & $\times$ \\
Logical entailment  & $\times$    & $\times$    & $\times$    & $\times$    & $\times$   \\
\bottomrule
\end{tabular}%
}%
\end{table}

\parh{Compiler optimizations.}
Mainstream ZK compilers (Circom~\cite{belles2023circomacircuitdescription}
(\texttt{O2}), ZoKrates~\cite{zokrates}, Noir~\cite{noir-lang}, and
Leo~\cite{leo-lang}) apply standard syntactic and algebraic simplifications,
including constant folding, common subexpression elimination, and Gaussian
elimination over linear constraints. Nonlinear constraints are untouched, and
none of these passes reasons about the values variables may take.\looseness=-1

\parh{Distilling constraints.}
Albert et al.~\cite{distillconstraints} extend Circom by deriving \emph{new}
linear consequences from nonlinear R1CS triples via Gaussian elimination over
their quadratic-monomial coefficients, then substituting and deleting. This
analysis is purely algebraic.\looseness=-1

\parh{CirC.}
CirC~\cite{Ozdemir2022CirCCI} shares an IR across R1CS, SMT, and ILP back-ends
and applies standard IR-level optimizations during lowering: constant folding,
common-subexpression elimination, and linearity reduction. It has no awareness
of entailment redundancies.\looseness=-1

\parh{Clap.}
Clap~\cite{stronati2024clapsemanticpreservingoptimizingedsl} targets PLONKish
circuits with structural rewrites (linear inlining, gate consolidation, and
custom-gate synthesis) under formal soundness guarantees. The rewrites operate
on circuit syntax, so a check that is semantically entailed by other gadgets
remains invisible to them.\looseness=-1

\parh{Halo2 analyzer.}
The Halo2 analyzer\cite{halo2-analyzer} applies abstract interpretation to
PLONKish circuits to flag unused gates, unused columns, and
assigned-but-unconstrained cells. This analysis only detects structural dead
weight.

\reved{In sum, each prior pass decides removals from the syntactic or algebraic form
of constraints considered in isolation; none asks what the remaining
constraints entail. \S\ref{sec:eval-rq3} confirms the gap empirically: on
our benchmarks these tools remove at most 0.4\%, while entailment-based
debloating removes up to 48.7\%.}\looseness=-1

\section{Problem and Motivation}
\label{sec:motivation}

This section fixes the sound debloating problem formally
(\S\ref{sec:problem-formulation}), grounds it in a concrete zero-knowledge ML
fragment (\S\ref{sec:motivating-example}), and uses that fragment to surface
three challenges any sound debloater must address (\S\ref{sec:obstacles}).\looseness=-1

\subsection{Problem Formulation}
\label{sec:problem-formulation}


Let $\mathcal{C} = \{c_1, \ldots, c_n\}$ be a constraint system over variables
$\mathcal{V}$.

\begin{definition}[Satisfaction]
\label{def:satisfaction}
A witness $w : \mathcal{V} \to \Fp$ \emph{satisfies} a constraint $c$, \reved{written
$w \models c$, if $c$ holds under the assignment $w$}. We write $w \models
\mathcal{C}$ if $w \models c$ for every $c \in \mathcal{C}$.
\end{definition}

\begin{definition}[Entailment]
\label{def:entailment}
A set of constraints $\Gamma \subseteq \mathcal{C}$ (the \emph{premises})
\emph{entails} a constraint $c$, \reved{written $\Gamma \vdash c$, if every witness
satisfying $\Gamma$ also satisfies $c$}. We extend entailment to sets on the
right: for a set $\Phi$ of constraints, $\Gamma \vdash \Phi$ iff $\Gamma
\vdash c$ for every $c \in \Phi$.
\end{definition}

We define redundancy and sound debloating using entailment.

\begin{definition}[Redundancy]
\label{def:redundancy}
A constraint $c \in \mathcal{C}$ is \emph{redundant} with respect to
$\mathcal{C}$ if $\mathcal{C} \setminus \{c\} \vdash c$.
\end{definition}


\begin{definition}[Safe debloating]
\label{def:debloating}
A subset $\mathcal{R} \subseteq \mathcal{C}$ is a \emph{safe debloating} of
$\mathcal{C}$ if $\mathcal{C} \setminus \mathcal{R} \vdash \mathcal{R}$. This
is \emph{sound} since every witness that satisfies the debloated system
$\mathcal{C} \setminus \mathcal{R}$ also satisfies the original $\mathcal{C}$.
This property is stringent: an unsound removal admits a witness $w$ with $w
\models \mathcal{C} \setminus \mathcal{R}$ but $w \not\models \mathcal{C}$,
potentially allowing a malicious prover to forge a witness that the verifier
accepts but that does not correspond to a legitimate computation.
\end{definition}


The ideal target is a \emph{maximum} safe debloating:
\[
  \mathcal{R}^\star \;\in\; \arg\max_{\mathcal{R} \subseteq \mathcal{C}}\; |\mathcal{R}|
  \quad\text{subject to}\quad \mathcal{C} \setminus \mathcal{R} \vdash \mathcal{R}.
\]
Computing $\mathcal{R}^\star$ exactly is intractable (\S\ref{sec:obstacles}), so
we instead compute a large feasible $\mathcal{R}$ in polynomial time
(\S\ref{sec:approach}).\looseness=-1


\subsection{Motivating Example}
\label{sec:motivating-example}

\begin{figure}[t]

\resizebox{\columnwidth}{!}{%
\begin{tikzpicture}[
  font=\small,
  >={Latex[length=1.8mm, width=1.3mm]},
  box/.style={fill=black!6, text width=6.2cm,
              inner xsep=4pt, inner ysep=5pt,
              align=left},
  sub/.style={->, dashed, semithick},
]


\node[box, anchor=north west] (fc) at (0, 0) {%
  {\centering\textbf{FC}\par}\vspace{2pt}%
  \parbox[t]{3.1cm}{\centering\footnotesize $y = Wx + b$}%
  \parbox[t]{3.1cm}{\centering\footnotesize $c_{\mathsf{fc}}\colon y = Wx + b$}%
};

\node[box, below=3mm of fc] (relu) {%
  {\centering\textbf{ReLU}\par}\vspace{2pt}%
  \parbox[t]{3.1cm}{\centering\footnotesize $z = \max(0,\, y)$}%
  \parbox[t]{3.1cm}{\centering\footnotesize $c_{\mathsf{relu}}\colon z \in [0,\; R_{\mathsf{relu}}]$}%
};

\node[box, below=3mm of relu] (dec) {%
  {\centering\textbf{Decomposition}\par}\vspace{2pt}%
  \parbox[t]{3.1cm}{\centering\footnotesize $z = s(d_1 B + d_0)$}%
  \parbox[t]{3.1cm}{\centering\footnotesize
    $c_{\mathsf{dec}}\colon z = s(d_1 B{+}d_0)$\\[1pt]
    $c_{d_0}\colon d_0 \in [0,\, B)$\\[1pt]
    $c_{d_1}\colon d_1 \in [0,\, B)$\\[1pt]
    $c_{\mathsf{sign}}\colon s \in \{-1,{+}1\}$}%
};

\node[box, below=3mm of dec] (nxt) {%
  {\centering\textbf{Next layer}\par}\vspace{2pt}%
  \parbox[t]{3.1cm}{\centering\footnotesize\strut}%
  \parbox[t]{3.1cm}{\centering\footnotesize $c_{\mathsf{rng}}\colon z \in [0,\; R_{\mathsf{rng}}]$}%
};

\foreach \b in {fc, relu, dec, nxt} {
  \draw[dashed, very thin, black!40]
    ([yshift=-5mm] \b.north) -- ([yshift=1.5mm] \b.south);
}

\foreach \upper/\lower in {fc/relu, relu/dec, dec/nxt} {
  \path (\upper.south) -- (\lower.north) coordinate[pos=0.2] (arr@top)
                                          coordinate[pos=0.8] (arr@bot);
  \draw[line width=0.4pt, black, fill=white]
    ([xshift=-0.35mm] arr@top) --
    ([xshift=-0.35mm, yshift=0.6mm] arr@bot) --
    ([xshift=-0.8mm, yshift=0.6mm] arr@bot) --
    (arr@bot) --
    ([xshift=0.8mm, yshift=0.6mm] arr@bot) --
    ([xshift=0.35mm, yshift=0.6mm] arr@bot) --
    ([xshift=0.35mm] arr@top) -- cycle;
}

\coordinate (sub-start) at ([xshift=1.5mm] nxt.east);
\coordinate (sub-end)   at ([xshift=1.5mm] relu.east);
\coordinate (sub-mid)   at ([xshift=8mm] $(relu.east)!0.55!(nxt.east)$);
\draw[sub]
  (sub-start) -- (sub-start -| sub-mid) -- (sub-end -| sub-mid) -- (sub-end);
\node[anchor=north, rotate=90, font=\footnotesize\itshape, text=black!70]
  at ([xshift=0.5mm] sub-mid)
  {$c_{\mathsf{rng}} \Rightarrow c_{\mathsf{relu}}$ \;\scriptsize if $R_{\mathsf{relu}} \ge R_{\mathsf{rng}}$};

\coordinate (bic-top) at ([xshift=-6mm, yshift=-7mm] dec.north east);
\coordinate (bic-bot) at ([xshift=-6mm, yshift=-17mm] dec.north east);
\draw[decorate, decoration={brace, amplitude=3pt}, thick, black!60]
  (bic-top) -- (bic-bot)
  node[midway, right=4pt+1.5em, font=\footnotesize\itshape, text=black!70,
       rotate=90, anchor=south] {bound-induced collapse};

\end{tikzpicture}%
}

\par\smallskip
{\small \textit{(a) Multi-layer fragment with constraints per gadget.}}

\vspace{3mm}

\resizebox{\columnwidth}{!}{%
\begin{tikzpicture}[
  x=0.91cm,
  font=\footnotesize,
  >={Latex[length=1.5mm, width=1.1mm]},
  cn/.style={anchor=west, inner sep=1.5pt, font=\footnotesize},
  fact/.style={draw, rounded corners, fill=black!4,
               inner sep=3pt, anchor=west, font=\footnotesize},
  direct/.style={->, thick},
  joint/.style={->, semithick, dashed},
  title/.style={font=\footnotesize\itshape, anchor=south},
]

\node[title] at (1.8, 0.9) {Derivations of $x \in [0,\, 10]$};

\node[cn] at (0, 0.4) (L-c2) {$c_2\colon x \in [0,10]$};
\node[fact] at (2.6, 0.4) (L-fx) {$x \in [0,\, 10]$};
\draw[direct] (L-c2.east) -- node[above, font=\scriptsize\itshape] {direct} (L-fx.west);

\node[cn] at (0, -0.05) (L-c1) {$c_1\colon x{+}y{=}10$};
\node[cn] at (0, -0.45) (L-c3) {$c_3\colon y \in [0,10]$};
\node[fact] at (2.6, -0.25) (L-fxj) {$x \in [0,\, 10]$};
\draw[joint] (L-c1.east) -- node[above, font=\scriptsize\itshape, pos=0.4] {joint} (L-fxj.west);
\draw[joint] (L-c3.east) -- (L-fxj.west);


\node[title] at (6.0, 0.9) {Derivations of $y \in [0,\, 10]$};

\node[cn] at (4.2, 0.4) (R-c3) {$c_3\colon y \in [0,10]$};
\node[fact] at (6.8, 0.4) (R-fy) {$y \in [0,\, 10]$};
\draw[direct] (R-c3.east) -- node[above, font=\scriptsize\itshape] {direct} (R-fy.west);

\node[cn] at (4.2, -0.05) (R-c1) {$c_1\colon x{+}y{=}10$};
\node[cn] at (4.2, -0.45) (R-c2) {$c_2\colon x \in [0,10]$};
\node[fact] at (6.8, -0.25) (R-fyj) {$y \in [0,\, 10]$};
\draw[joint] (R-c1.east) -- node[above, font=\scriptsize\itshape, pos=0.4] {joint} (R-fyj.west);
\draw[joint] (R-c2.east) -- (R-fyj.west);

\end{tikzpicture}%
}

\par\smallskip
{\small \textit{(b) Premise asymmetry: removing both lookups breaks all joint paths.}}

\caption{Illustrative redundancy patterns in ZK ML circuits.}
\Description{Two panels illustrating redundant constraints. Panel (a)
  follows a value through fully connected, ReLU, decomposition, and next-layer
  gadgets. A next-layer range constraint subsumes the ReLU range constraint,
  while tighter bounds collapse decomposition digits and the sign. Panel (b)
  shows that each of the facts x in zero to ten and y in zero to ten has both a
  direct derivation and a joint derivation through the sum constraint and the
  other variable's range; removing both direct lookups breaks the joint paths.}
\label{fig:motivation}
\end{figure}

\F~\ref{fig:motivation}(a) shows a multi-layer fragment produced by a typical ZK
machine learning framework (we take ezkl in this example). Each gadget emits its
own soundness checks without consulting its neighbours, producing the
constraints listed in the right column:
\begin{itemize}[leftmargin=*]
  \item $c_{\mathsf{fc}}$: fully-connected gate enforcing $y = Wx + b$.
  \item $c_{\mathsf{relu}}$: ReLU lookup asserting $z \in [0, R_{\mathsf{relu}}]$.
  \item $c_{\mathsf{dec}}$: decomposition gate $z = s \cdot (d_1 B + d_0)$.
  \item $c_{d_0}, c_{d_1}$: digit lookups asserting $d_0, d_1 \in [0, B)$.
  \item $c_{\mathsf{sign}}$: sign lookup asserting
    $s \in \{-1, +1\}$.
  \item $c_{\mathsf{rng}}$: next-layer range check asserting $z \in [0,
    R_{\mathsf{rng}}]$.
\end{itemize}
In isolation every constraint is essential, but in the full circuit
redundancies emerge:
\begin{itemize}[leftmargin=*]
  \item \emph{Range-check subsumption.} Assume that $R_{\mathsf{rng}} \le
    R_{\mathsf{relu}}$. $c_{\mathsf{relu}}$ asserts $z \in [0,
    R_{\mathsf{relu}}]$, but the downstream $c_{\mathsf{rng}}$ already enforces
    $z \in [0, R_{\mathsf{rng}}]$, making $c_{\mathsf{relu}}$ redundant (dashed
    arrow in the figure).
  \item \emph{Bound-induced collapse.} Assume the sign and low-digit
    checks remain, and another surviving constraint bounds
    $m=d_1B+d_0$ to $[0,B)$. When $R_{\mathsf{rng}} < B$, the downstream bound
    gives $z \in [0,B)$, while $m \in [0,B)$ and $d_0 \in [0,B)$ force
    $d_1=0$ because $d_1B=m-d_0 \in (-B,B)$ is a multiple of $B$. Thus
    $c_{d_1}$ is redundant; $c_{\mathsf{sign}}$ can be removed only if
    another surviving fact pins $s=1$.
\end{itemize}
Identifying and safely removing such redundancies is the goal of this
work, but doing so raises three challenges.

\subsection{Design Challenges}
\label{sec:obstacles}


\parh{Challenge~A: Non-local subsumption.} The ReLU lookup
$c_{\mathsf{relu}}$ asserts $z \in [0, R_{\mathsf{relu}}]$ but is fully
subsumed by the downstream range check $c_{\mathsf{rng}}: z \in [0,
R_{\mathsf{rng}}]$ with $R_{\mathsf{rng}} \le R_{\mathsf{relu}}$; the
bound-induced collapse is similarly non-local. Certifying either redundancy
requires bounding $z$ from constraints spread across the whole circuit.\looseness=-1

Bounding variables from surrounding constraints is the canonical problem of
\emph{deep neural network (DNN)
verification}~\cite{katz2017reluplex,zhang2018efficient,singh2019abstract}. Yet
mature DNN techniques (linear bound propagation, SMT/MILP) assume a layered DAG
seeded at a designated input region or output property, which a compiled ZK
constraint system lacks. The bound on $z$, for instance, flows backward from
$c_{\mathsf{rng}}$ to $c_{\mathsf{relu}}$, and elsewhere bounds flow forward
across one equality and backward across the next. \emph{This rules out any
analysis tied to a fixed traversal direction.} We adopt abstract interpretation
in a worklist form (\S\ref{sec:abstract-interp}), whose fixpoint propagates
facts in any direction over the flat constraint set without a designated
starting point.\looseness=-1

\parh{Challenge~B: Premise asymmetry.} Consider \F~\ref{fig:motivation}(b): a
gate $c_1: x + y = 10$ and two lookups $c_2: x \in [0, 10]$,
$c_3: y \in [0, 10]$. The fact $x \in [0, 10]$ has two derivations: directly
from $c_2$, or jointly from $\{c_1, c_3\}$ since $x = 10 - y$ and $y \in [0,
10]$ force $x \in [0, 10]$; $y$'s derivations are symmetric. Each lookup is
individually redundant, yet removing both leaves only $c_1$, which admits the
unexpected witness $x = -5, y = 15$: each joint derivation required the very
lookup just deleted. Such mutual dependencies recur in ZK machine learning
circuits, e.g., paired bounds in division gadgets, cross-layer range checks.\looseness=-1

Formally, each $c_i$'s redundancy is established against $\mathcal{C} \setminus
\{c_i\}$, which still contains its partner. Safe debloating requires entailment
from the strictly smaller $\mathcal{C} \setminus \mathcal{R}$
(\Def~\ref{def:debloating}), in which every removed constraint is gone
simultaneously. We call this gap \emph{premise asymmetry}: a constraint
redundant given its peers may stop being so once those peers are also removed.
\emph{Safe composition therefore requires a data structure that tracks
alternate derivations of each fact, plus a search procedure that rejects
circular derivations among them.} We build a provenance graph
(\S\ref{sec:provenance}) recording which constraints proved which fact,
exposing alternate derivations to the removal loop.


\parh{Challenge~C: Intractable exact methods.} The decision
$\mathcal{C} \setminus \mathcal{R} \vdash \mathcal{R}$ is co-NP-hard already
for a single constraint (3-UNSAT reduces via a degree-3 polynomial encoding
over $\mathbb{F}_p$); SMT-based analyses do not scale to constraint systems of
this size~\cite{picus}, while the circuits we evaluate reach tens of
millions. The optimisation compounds the hardness: it ranges over
$2^{|\mathcal{C}|}$ candidates, and Challenge~B shows subset validity does not
decompose into singleton checks, placing the exact problem in $\Sigma_2^P$.\looseness=-1

We therefore forgo exact methods for a polynomial-time, sound-by-construction
over-approximation. Three practical limits shape its design.
\textbf{One at a time}: with $2^{|\mathcal{C}|}$ subsets to enumerate, we grow
$\mathcal{R}$ greedily, accepting each addition only if safely removable given
prior commits; the cumulative mask is also the snapshot Challenge~B's
proof search consults, so one-at-a-time is what makes cycle-aware checking
possible. \textbf{Shared work}: rerunning the analysis after each tentative
removal would multiply its cost by $|\mathcal{C}|$, not scalable
(\S\ref{sec:eval}); we instead compute the derivations once and reuse them via
the provenance graph. \textbf{Targeted obligations}: since checking
even single-constraint redundancy is hard, a heuristic \emph{entailment filter}
(\S\ref{sec:debloating}) flags constraints whose enforced property the fixpoint
already entails and names the facts the proof search must re-derive.\looseness=-1

\S\ref{sec:approach} instantiates these three responses as a single pipeline.

\section{Approach}
\label{sec:approach}

For the check classes handled by our entailment rules, removing a
constraint $c$ from $\mathcal{C}$ is safe iff the facts $c$ enforces are still
derivable from $\mathcal{C} \setminus \{c\}$; our pipeline turns that
derivability test into a tractable computation in three phases:
\[
\mathcal{C}
\;\xrightarrow[\,\S\ref{sec:abstract-interp}\,]{\text{\small abstract
interp.}}\; \sigma^\star
\;\xrightarrow[\,\S\ref{sec:provenance}\,]{\text{\small provenance}}\; G
\;\xrightarrow[\,\S\ref{sec:debloating}\,]{\text{\small debloating}}\;
\mathcal{R}. \] \emph{Abstract interpretation} proves which facts
hold of each variable, computing a fixpoint $\sigma^\star$ over three
abstract domains (intervals, known bits, constants). The \emph{provenance
graph} $G$ records, for each fact in $\sigma^\star$, which constraints
proved it from which premise facts. \emph{Provenance-guided debloating}
then verifies, for each candidate $c$, that the facts $c$ enforces remain
derivable in $G$ after $c$ is masked out; if so, $c$ is removed, yielding
the removal set $\mathcal{R}$. \reved{In
Appendix~\ref{sec:walkthrough} we give an end-to-end example of the three
phases on a five-constraint circuit distilled from
\S\ref{sec:motivating-example}.}

\subsection{Abstract Interpretation over Constraints}
\label{sec:abstract-interp}

\parh{Abstract domain.}
Each variable $v \in \mathcal{V}$ is tracked by an abstract value drawn from
a \emph{reduced product} of three lattice domains:

\begin{enumerate}[leftmargin=*]
  \item \emph{Constant}: either a known field element or $\top$ (unknown).
  \item \emph{Interval}: a range $[\ell, u] \subseteq \mathbb{Z}$ bounding all
    possible values of $v$ when interpreted as a signed integer.
  \item \emph{Known bits}: a bitmask $m$ and a value pattern $b$, recording
    that for every bit position $i$ where $m_i = 1$, the $i$-th bit of $v$
    equals $b_i$.
\end{enumerate}

The three components cross-propagate: a refinement in one domain sharpens the
others via standard reduced-product rules~\cite{cousot77, cousot79}, as in
LLVM's \texttt{KnownBits} analysis~\cite{llvm-knownbits}.

\parh{Transfer functions.}
Each constraint type defines a \emph{transfer function} that refines
the abstract values of its variables. Transfer functions handle arithmetic
gates (addition, multiplication, scaling, and their inverses), lookup tables
(intersecting a variable's abstract value with the table's range), and bit
decompositions (linking interval bounds to individual bit facts).
Propagation is non-directional: $x + y = z$ may tighten any of $x$, $y$, or
$z$ from the other two.\looseness=-1

We show three representative rules in inference-rule form; the remainder follow
the same template. Premises above the line are the abstract-state facts
consulted; the conclusion below the line is the contributed refinement. $\sigma$
denotes the current state; $A, B$ abbreviate $\sigma(a), \sigma(b)$;
$\sqsubseteq$ is the abstract-interval order (sharper bounds are smaller); and
$A \otimes B$ is the smallest interval containing $\{ab : a \in A,\ b \in B\}$
(interval multiplication).

{\small
\[
\begin{array}{c}\textsc{Lookup}\\[2pt]
\dfrac{v \in T \quad \mathrm{range}(T) = [t_\ell, t_u]}
      {[t_\ell, t_u] \sqsubseteq \sigma(v)}
\end{array}
\qquad
\begin{array}{c}\textsc{Bitify}\\[2pt]
\dfrac{\begin{array}{c}
        v = \textstyle\sum_{i<n} b_i \cdot 2^i \quad b_i \in \{0,1\} \\[2pt]
        \sigma(b_i) = \{0\} \text{ for } i \geq k
       \end{array}}
      {[0,\, 2^k - 1] \sqsubseteq \sigma(v)}
\end{array}
\]
\[
\textsc{Quadratic gate}\quad
\dfrac{a \cdot b + c = 0 \quad \sigma(a) = A \quad \sigma(b) = B}
      {-(A \otimes B) \sqsubseteq \sigma(c)}
\]
}

Symmetric variants are elided: \textsc{Bitify} also propagates backward, and
\textsc{Quadratic gate} solves backward for $a$ or $b$: \reved{rearranging
$a \cdot b + c = 0$ divides by the other operand, giving
$-(\sigma(c) / \sigma(b)) \sqsubseteq \sigma(a)$ and symmetrically for
$b$. The backward rule fires only when the divisor's range is
sign-determined (in particular, excludes zero)}, which makes the interval
division well defined; otherwise it contributes nothing.

\parh{Soundness under field wrap-around.}
Witnesses live in $\Fp$, not $\mathbb{Z}$. We interpret a field element
as its canonical signed representative in the window
$W = [-(p-1)/2,\,(p-1)/2]$. A transfer rule is allowed to contribute an
integer fact only when the concrete operation it abstracts is unique in this
signed interpretation: for addition and scaling, the operand ranges and the
resulting range must stay inside $W$; for multiplication, the full interval
product must stay inside $W$; and for backward rules such as solving a
quadratic gate for one operand, the divisor/range side conditions used by the
rule must also keep the inferred range inside $W$. Lookup and bit-decomposition
rules are treated as integer facts only for table entries or bit widths whose
represented values lie inside the same signed window. We enforce this
discipline by widening any out-of-window conclusion to $\top$:

\[
\textsc{WrapGuard}\quad
\dfrac{[\ell, u] \sqsubseteq \sigma(v) \;\text{(derived above)}
       \quad [\ell, u] \not\subseteq W}
      {\top \sqsubseteq \sigma(v)}
\]

Thus, any modular wrap-around that would make the integer reading
ambiguous causes the analysis to keep no refinement from that rule. With the
wrap-around guard, the analysis performs a refinement only when integer and
modular arithmetic agree on it, so every kept fact holds for every witness
(wrap-around or not).

\parh{Fixpoint computation.}
Variables start at $\top$, except those pinned to circuit constants (e.g.,
fixed-column entries or hard-coded values), which start at the corresponding
constant. A standard worklist re-applies each constraint's transfer function
whenever one of its variables is refined.

\parh{Soundness, termination, and complexity.}
Each transfer function is \emph{locally sound}: for every constraint $c$ and
every witness $w \models c$, applying $\mathrm{transfer}_c$ to any abstract
state that over-approximates $w$ yields a state that still over-approximates
$w$. We verify this condition per constraint type and take it as a
precondition on the rules. Local soundness, monotonicity
($\sigma_1 \sqsubseteq \sigma_2 \Rightarrow \mathrm{transfer}_c(\sigma_1)
\sqsubseteq \mathrm{transfer}_c(\sigma_2)$), and the bounded height of the
lattice (every interval lies in $W$ by \textsc{WrapGuard}) are the standard
ingredients of
the worklist fixpoint construction~\cite{cousot77, cousot79}: chaotic
iteration (a worklist with arbitrary dequeue order) reaches a fixpoint
$\sigma^\star$ that over-approximates every $w \models \mathcal{C}$.
With $n = |\mathcal{C}|$ constraints, $d$ the maximum constraint arity, and
$h$ the height of the interval lattice over $W$, the construction runs in
$O(n \cdot d \cdot h)$ time.

\subsection{Provenance Graph}
\label{sec:provenance}

The fixpoint $\sigma^\star$ proves facts but discards how they were
derived; the premise asymmetry of Challenge~B shows why that history matters.
To handle it without re-running the fixpoint after every candidate removal,
we record the derivation structure as a \emph{provenance
graph}~\cite{DOYLE1979231, assumptionbasedtms}: a bipartite AND/OR graph
whose fact nodes (OR) hold proven facts and whose constraint nodes (AND)
record which constraints, with which premise facts, established them.\looseness=-1

To track derivations at the granularity downstream entailment needs, we
\reved{\emph{project} each variable's reduced-product value $\sigma^\star(v)$ into
up to three \emph{atomic facts}, one per abstract domain: a lower bound, an
upper bound, and a known-bits pattern}. Each atom is separately provable, so
a constraint can take credit for the specific atom it contributes and
provenance tracks each atom on its own.

For a state $\sigma$ and a fact $k$ over variable $v$, we write $\sigma(v)
\Vdash k$ to mean that every concrete value in $\gamma(\sigma(v))$
satisfies~$k$. We use $\Vdash$ to distinguish abstract from semantic entailment
(\S\ref{sec:problem-formulation}).

\begin{definition}[Provenance graph]
\label{def:provgraph}
The \emph{provenance graph} of constraint system $\mathcal{C}$ at fixpoint
$\sigma^\star$ is a directed bipartite graph
$G = (F \cup N,\, E)$ with parts $F$ (fact nodes, the OR side) and
$N$ (constraint nodes, the AND side):
\begin{itemize}[leftmargin=*]
  \item $F \subseteq \mathcal{V} \times \mathcal{K}$ is the set of
    \emph{fact nodes}, one per pair $(v, k)$ with $k \in \mathcal{K} =
    \{\mathit{lo}, \mathit{hi}, \mathit{bits}\}$ and
    $\sigma^\star(v) \Vdash (v, k)$. Each node asserts the corresponding
    atom of $\sigma^\star(v)$: a lower bound $v \geq \ell$, an upper bound
    $v \leq u$, or a known-bits pattern (a bitmask plus value pinning
    specific bit positions of $v$), using the values $\sigma^\star(v)$
    records. This per-domain projection lets each entailment check
    consume only what it needs: bound checks read $\mathit{lo}/\mathit{hi}$,
    bit-level checks read $\mathit{bits}$.
  \item $N$ contains one \emph{constraint node} per triple $(c, f, P)$
    such that constraint $c \in \mathcal{C}$ derives target fact
    $f \in F$ from premise set $P \subseteq F$. A constraint may
    contribute several nodes (one per derivable fact). A node with
    $P = \emptyset$ is an \emph{axiom node}: $c$ derives $f$ from its
    semantics alone, independent of any other proven fact. Axiom
    nodes (e.g., a lookup $v \in T$ with column range $[t_\ell, t_u]$
    contributing the bound facts of $v$) cannot participate in dependency
    cycles, providing cycle-free proof paths the debloater can rely on.
    Constraint nodes with $P \neq \emptyset$ are \emph{joint
    nodes}, deriving their target from the conjunction of premise facts.
  \item $E$ contains, for each $\nu = (c, f, P) \in N$, a
    \emph{premise edge} from each $p \in P$ to $\nu$, and a
    \emph{derivation edge} from $\nu$ to its target $f$.
\end{itemize}
\end{definition}

\parh{Construction.}
Each constraint node $\nu = (c, (v, k), P)$ records what the fixpoint
does not track: which fact $(v, k)$ the constraint $c$ alone contributes, and
which premises $P$ that contribution consults. \Alg~\ref{alg:build-prov-graph}
computes this in one outer loop over $\mathcal{C}$; no fixpoint iteration is
needed. For each $(c, v)$, it \emph{replays} $c$'s transfer function with $v$
reset to $\top$ and other variables held at $\sigma^\star$; for the
monotone, single-constraint transfer functions implemented by our analyzer,
this replay exposes the refinements that $c$ can contribute to $v$ at the
fixpoint. We do not require this replay to be complete for all possible
semantic consequences of $c$; facts not exposed by an implemented transfer rule
are simply unavailable to the debloater. $\textsc{MinPremises}$ (below) then
trims $P$ to the facts the replay actually consulted.

\begin{figure}[t]
\begin{flushleft}
\algcaption{Provenance graph construction.}
\label{alg:build-prov-graph}
\hrule\smallskip
{\small
\Require Constraints $\mathcal{C}$, fixpoint $\sigma^\star$
\Ensure Provenance graph $G = (F \cup N, E)$
\State $F \gets \{(v, k) : v \in \mathcal{V},\ k \in \mathcal{K},\ \sigma^\star(v) \Vdash (v, k)\}$
\State $N \gets \emptyset$
\State \textbf{for} each $c \in \mathcal{C}$ \textbf{do}
\State \quad \textbf{for} each variable $v$ referenced by $c$ \textbf{do}
\State\label{alg:reset} \quad\quad $\sigma' \gets \sigma^\star[v \mapsto \top]$\hfill$\triangleright$ reset $v$
\State\label{alg:apply} \quad\quad $\sigma'' \gets \mathrm{transfer}_c(\sigma')$\hfill$\triangleright$ replay
\State\label{alg:loop} \quad\quad \textbf{for} each $k \in \mathcal{K}$ with $\sigma''(v) \Vdash (v, k)$ \textbf{do}
\State \quad\quad\quad $P \gets \textsc{MinPremises}(c, v, k, \sigma^\star)$
\State\label{alg:add-node} \quad\quad\quad $N \gets N \cup \{(c,\ (v, k),\ P)\}$
\State $E \gets$ premise and derivation edges induced by $N$ (\Def~\ref{def:provgraph})
\State \Return $(F \cup N,\ E)$
}
\smallskip\hrule
\end{flushleft}
\Description{Algorithm for constructing the provenance graph. It iterates over
constraints and referenced variables, resets one variable to top, replays the
constraint transfer, minimizes the premises of every derived fact, adds the
resulting constraint nodes, and returns the graph with premise and derivation
edges.}
\end{figure}

\parh{Minimal premises.}
The replay above held \emph{every} other-variable fact at $\sigma^\star$, but
most were not actually consulted. Recording the maximal set would inflate the
graph and, worse, manufacture spurious dependencies that the debloater would
later have to mask around. $\textsc{MinPremises}$ (\Alg~\ref{alg:min-premises}) trims them
via a \emph{drop-to-top} rule: candidates are individual facts $(v', k')$
about variables $v'$ that $c$ references, and the procedure tentatively
drops each in turn (widening the affected variable toward $\top$ via
$\mathit{state}(\cdot)$ from \Alg~\ref{alg:min-premises}), replays
$\mathrm{transfer}_c$, and discards a candidate if the target still
emerges. Soundness follows from
\S\ref{sec:abstract-interp}'s monotonicity: a fact whose absence the target
survives was not consulted. Visit
order is arbitrary; different orders yield different irreducible sets, but
proof search needs \emph{some} proof, not a minimum-cardinality one.\looseness=-1

\begin{figure}[t]
\begin{flushleft}
\algcaption{$\textsc{MinPremises}$ via drop-to-top.}
\label{alg:min-premises}
\hrule\smallskip
{\small
\par\noindent\textit{Notation.} For a fact set $S \subseteq F$, let
$\mathit{state}(S)$ denote the abstract state assigning $v \mapsto \top$
and each other variable $u$ the narrowest abstract value consistent
with the facts $\{(u, k') \in S\}$.
\par\smallskip
\Require Constraint $c$, target fact $(v, k)$, fixpoint $\sigma^\star$
\Ensure Irreducible $P \subseteq F$ such that
$\mathrm{transfer}_c(\mathit{state}(P))(v) \Vdash (v, k)$
\State $P \gets \{(v', k') : v' \in \mathrm{vars}(c) \setminus \{v\},\ k' \in \mathcal{K},\ \sigma^\star(v') \Vdash (v', k')\}$
\State \textbf{for} each $p \in P$ (arbitrary order) \textbf{do}
\State \quad $\sigma' \gets \mathit{state}(P \setminus \{p\})$\hfill$\triangleright$ widen by dropping $p$
\State \quad \textbf{if} $\mathrm{transfer}_c(\sigma')(v) \Vdash (v, k)$ \textbf{then}
\State \quad\quad $P \gets P \setminus \{p\}$\hfill$\triangleright$ $p$ unused
\State \Return $P$
}
\smallskip\hrule
\end{flushleft}
\Description{Algorithm for minimizing the premises of a derived fact. Starting
from all relevant fixpoint facts, it drops each premise in turn, reconstructs
the abstract state, and permanently removes the premise when replaying the
constraint still derives the target fact.}
\end{figure}

\parh{Complexity.}
Write $n = |\mathcal{C}|$ and $d$ for the maximum number of variables referenced
by any single constraint.

\begin{theorem}[Construction complexity]
\label{thm:prov-complexity}
\Alg~\ref{alg:build-prov-graph} runs in time $O(n \cdot d^3)$ and produces
a graph with $O(|\mathcal{V}|)$ fact nodes, $O(n \cdot d)$ constraint nodes, and
$O(n \cdot d^2)$ edges.
\end{theorem}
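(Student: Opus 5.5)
The proof is a direct count over the nested loops of \Alg~\ref{alg:build-prov-graph}. We treat the cost of a single application of $\mathrm{transfer}_c$ as $O(d)$: it reads and writes the abstract values of the at most $d$ variables of $c$. We also treat one $\Vdash$ test on an atom as $O(1)$. This assumes the known-bits masks are machine-word operations, or are charged to the fixed width of $W$, and we state this as a cost-model assumption.

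\textbf{Fact nodes.} Line~1 creates at most one node per pair $(v,k)$ with $k\in\mathcal{K}$ and $|\mathcal{K}|=3$. This gives $|F|\le 3|\mathcal{V}|=O(|\mathcal{V}|)$, built in $O(|\mathcal{V}|)$ time.

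\textbf{Constraint nodes and edges.} The outer loop runs over $n$ constraints. The middle loop runs over at most $d$ referenced variables. The inner loop on line~\ref{alg:loop} runs over at most three atoms. Each iteration adds at most one node $(c,(v,k),P)$, so $|N|\le 3nd=O(nd)$. Each node has exactly one derivation edge and $|P|$ premise edges. By the initialization in \Alg~\ref{alg:min-premises}, $P$ is a subset of the atoms over $\mathrm{vars}(c)\setminus\{v\}$, so $|P|\le 3(d-1)$. Hence $|E|\le |N|(1+3d)=O(nd^2)$.

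\textbf{Running time.} For each pair $(c,v)$, the reset on line~\ref{alg:reset} can be done in place in $O(1)$ by overriding $v$ and restoring it afterwards. The replay on line~\ref{alg:apply} costs $O(d)$, and the three $\Vdash$ tests cost $O(1)$. The dominant cost is $\textsc{MinPremises}$. Its candidate set has $O(d)$ premises. Each drop step builds $\mathit{state}(P\setminus\{p\})$ restricted to $\mathrm{vars}(c)$ in $O(d)$ and replays the transfer in $O(d)$, so each step costs $O(d)$. One call therefore costs $O(d)\cdot O(d)=O(d^2)$. Multiplying by the $O(nd)$ $(c,v,k)$ triples gives $O(nd^3)$.

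Emitting $E$ on the line after the loop costs $O(|E|)=O(nd^2)$. Line~1 costs $O(|\mathcal{V}|)$, which is subsumed because every variable relevant to the graph occurs in some constraint, so $|\mathcal{V}|\le nd$ after discarding isolated variables. The total is $O(nd^3)$.

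\textbf{Main obstacle.} The delicate point is justifying that $\mathit{state}(\cdot)$ and the transfer replay cost only $O(d)$ rather than $O(|\mathcal{V}|)$. The plan is to argue that $\mathrm{transfer}_c$ reads only the variables of $c$, so the abstract state can be represented sparsely on $\mathrm{vars}(c)$ and never materialized globally. I would make explicit that all non-referenced variables are irrelevant to the replay and are left at their $\sigma^\star$ values implicitly.
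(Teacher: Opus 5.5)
Your proposal is correct and follows the same counting argument as the paper. Per $(c,v)$ pair you charge one $O(d)$ replay plus $O(d)$ premise-drop trials of $O(d)$ each (the three atoms are a constant factor), which gives $O(nd^3)$ time, and the node and edge bounds come from $|\mathcal{K}|=3$ and the per-node premise bound. Your extra care goes slightly beyond the paper's sketch: you bound $|P|$ by $3(d-1)$ atoms rather than its $d-1$, keep $\mathit{state}(\cdot)$ sparse on $\mathrm{vars}(c)$, and absorb the $O(|\mathcal{V}|)$ cost of building the fact set. This spells out cost-model assumptions the sketch leaves implicit, without changing the asymptotics.
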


Usually, a constraint has a limited number of variables, so $d$ is a small
constant, thus the construction is near-linear in $n$.
\reved{The proof sketch appears in Appendix~\ref{sec:appendix-proofs}.}\looseness=-1

\subsection{Provenance-Guided Debloating}
\label{sec:debloating}

Debloating selects constraints whose removal preserves the witness set.
An \emph{entailment filter} proposes candidates; a \emph{removal loop}
removes them one at a time, using AND/OR \emph{proof search} on $G$ to
certify each removal. We describe these components in order, then argue
soundness.

\parh{Entailment filter.}
A constraint $c$ is a \emph{candidate}, written $c \in
\mathcal{C}_{\mathsf{f}}$, when the fixpoint $\sigma^\star$ already implies
its enforced property: $\sigma^\star \Vdash_{\mathsf{ent}} c$. We define
$\Vdash_{\mathsf{ent}}$ by cases on the constraint shape; two are
representative, and the remainder follow the same template.

\begin{itemize}[leftmargin=*]
  \item \textsc{Lookup} $c:\ v \in T$ where $T$ is a single-column range
    table containing exactly the integers in $[t_\ell, t_u]$ — holds when
    $\sigma^\star(v) \Vdash v \in [t_\ell, t_u]$.
  \item \textsc{Gate} $c:\ e(\vec v) = 0$ — holds when
    $\sigma^\star(v_i) = \{a_i\}$ for all $i$ and $e(\vec a) = 0$.
\end{itemize}

Here $\sigma^\star(v) \Vdash v \in [t_\ell, t_u]$ abbreviates the joint
$\mathit{lo}/\mathit{hi}$ atom check, and $\sigma^\star(v_i) = \{a_i\}$ denotes
a singleton interval (lo and hi coincide at $a_i$). For lookup tables that are
not of this form, the range rule does not apply. Analogous cases
handle bit decompositions, R1CS products, and equals-zero constraints. The
filter is sound but incomplete: a redundant constraint whose shape does not
match any implemented case is simply retained.

\parh{Provability under a mask.}
A subset $M \subseteq N$ is a \emph{mask}, modeling constraint nodes that the
debloater has tentatively disabled; masks arise because removals are applied
incrementally and provability must be re-checked after each. Provability
under $M$ has AND/OR semantics: a fact $f \in F$ is provable iff some
unmasked constraint node $\nu \in N \setminus M$ targeting $f$ is itself
provable (OR), and a constraint node $\nu = (c, f, P) \in N \setminus M$ is
provable iff every premise $p \in P$ is provable (AND). We take the least
fixed point of these clauses: a fact is provable only when a finite, acyclic
derivation rooted at axiom nodes witnesses it; cyclic derivations do not
establish provability.\looseness=-1

\parh{Proof search.}
The removal loop needs to decide, for a fact $f$ and a mask $M$,
whether $f$ is provable in $G$ under $M$. We compute this by AND/OR
depth-first search (\Alg~\ref{alg:proof-search}).
$\textsc{ProveOr}(f, M)$ succeeds when some unmasked constraint node $\nu$
targeting $f$ succeeds; a constraint node $\nu \in N \setminus M$ succeeds when
every premise of $\nu$ does. The $\mathit{visiting}$ set returns \textsc{false}
on revisits to a node still on the search stack, refusing to derive a fact from
a circular proof and thereby ruling out the premise-asymmetry cycles
of Challenge~B (\S\ref{sec:obstacles}). The $\mathit{proven}$ and
$\mathit{disproven}$ memos record each fact node's result on return, so subsequent
references within the same query short-circuit. Each node and edge is thus
traversed at most once per query, giving $O(|F| + |N| + |E|)$ time.

\begin{figure}[t]
\begin{flushleft}
\algcaption{AND/OR proof search under a mask.}
\label{alg:proof-search}
\hrule\smallskip
{\small
\Require Graph $G = (F \cup N, E)$, target $f \in F$, mask $M \subseteq N$
\Ensure \textsc{true} iff $f$ is provable in $G$ under $M$
\par\smallskip\noindent\textit{Driver:}
\State $\mathit{visiting},\, \mathit{proven},\, \mathit{disproven} \gets \emptyset$
\State \Return $\textsc{ProveOr}(f, M)$
\par\smallskip\noindent\textit{function} $\textsc{ProveOr}(f, M)$:
\State \quad \textbf{if} $f \in \mathit{proven}$ \textbf{return} \textsc{true}\hfill$\triangleright$ memo
\State \quad \textbf{if} $f \in \mathit{disproven}$ \textbf{return} \textsc{false}\hfill$\triangleright$ memo
\State \quad \textbf{if} $f \in \mathit{visiting}$ \textbf{return} \textsc{false}\hfill$\triangleright$ cycle
\State \quad $\mathit{visiting} \gets \mathit{visiting} \cup \{f\}$;\quad $r \gets \textsc{false}$
\State \quad \textbf{for} each $\nu$ with $(\nu \to f) \in E$ \textbf{do}
\State \quad\quad \textbf{if} $\textsc{ProveAnd}(\nu, M)$ \textbf{then} $r \gets \textsc{true}$; \textbf{break}
\State \quad $\mathit{visiting} \gets \mathit{visiting} \setminus \{f\}$
\State \quad \textbf{if} $r$ \textbf{then} $\mathit{proven} \gets \mathit{proven} \cup \{f\}$ \textbf{else} $\mathit{disproven} \gets \mathit{disproven} \cup \{f\}$
\State \quad \textbf{return} $r$
\par\smallskip\noindent\textit{function} $\textsc{ProveAnd}(\nu = (c, f', P), M)$:
\State \quad \textbf{if} $\nu \in M$ \textbf{return} \textsc{false}
\State \quad \textbf{if} $\nu \in \mathit{visiting}$ \textbf{return} \textsc{false}\hfill$\triangleright$ cycle
\State \quad $\mathit{visiting} \gets \mathit{visiting} \cup \{\nu\}$;\quad $r \gets \textsc{true}$
\State \quad \textbf{for} each $p \in P$ \textbf{do}
\State \quad\quad \textbf{if} $\neg \textsc{ProveOr}(p, M)$ \textbf{then} $r \gets \textsc{false}$; \textbf{break}
\State \quad $\mathit{visiting} \gets \mathit{visiting} \setminus \{\nu\}$;\ \textbf{return} $r$
}
\smallskip\hrule
\end{flushleft}
\Description{Cycle-aware AND-OR proof-search algorithm under a mask. The driver
initializes visiting, proven, and disproven sets. The OR procedure searches for
an unmasked supporting constraint, while the AND procedure recursively proves
every premise; stack revisits fail to prevent circular proofs.}
\end{figure}

\parh{Removal loop.}
For each candidate $c \in \mathcal{C}_{\mathsf{f}}$, its
\emph{proof obligations} $\Obl(c) \subseteq F$ are the fact nodes the filter
consulted when admitting $c$. The two filter rules above instantiate
$\Obl(c)$ as follows: a lookup entailed by
$\sigma^\star(v) \Vdash v \in [t_\ell, t_u]$ contributes the $\mathit{lo}$
and $\mathit{hi}$ nodes of $v$; a gate entailed by $\sigma^\star(v_i) =
\{a_i\}$ for all $i$ contributes the $\mathit{lo}$ and $\mathit{hi}$ nodes
of every $v_i$ (which together pin $v_i$ to $a_i$). If every fact in $\Obl(c)$
remains provable under the cumulative mask of prior removals, the
loop removes $c$ and adds its constraint nodes to the mask
(\Alg~\ref{alg:commit-loop}).

\begin{figure}[t]
\begin{flushleft}
\algcaption{Removal loop for provenance-grounded debloating.}
\label{alg:commit-loop}
\hrule\smallskip
{\small
\Require Candidates $\mathcal{C}_{\mathsf{f}}$, provenance graph
$G = (F \cup N, E)$, ordering $\pi$ on $\mathcal{C}_{\mathsf{f}}$
\Ensure Removal set $\mathcal{R}$
\State $\mathcal{R} \gets \emptyset$;\quad $M \gets \emptyset$
\State \textbf{for} each $c \in \mathcal{C}_{\mathsf{f}}$ in order $\pi$ \textbf{do}
\State \quad $M_c \gets \{\,\nu = (c', \cdot, \cdot) \in N : c' = c\,\}$\hfill$\triangleright$ $c$'s constraint nodes
\State \quad $M \gets M \cup M_c$\hfill$\triangleright$ tentatively mask $c$
\State \quad \textbf{if} $\textsc{ProveOr}(f, M)$ succeeds for every $f \in \Obl(c)$ \textbf{then}
\State \quad\quad $\mathcal{R} \gets \mathcal{R} \cup \{c\}$\hfill$\triangleright$ remove; mask stays
\State \quad \textbf{else}
\State \quad\quad $M \gets M \setminus M_c$\hfill$\triangleright$ revert tentative mask
\State \Return $\mathcal{R}$
}
\smallskip\hrule
\end{flushleft}
\Description{Greedy removal-loop algorithm. For each candidate constraint, it
tentatively masks all graph nodes belonging to that constraint and proves every
removal obligation. It retains the mask and records the removal on success, or
restores the nodes on failure.}
\end{figure}

\parh{Ordering and intractability.}
Finding the largest removal set is at least as hard as the minimum equivalent
subformula problem, which is coNP-hard~\cite{Liberatore_2005}: the maximum removal
set corresponds to a smallest equivalent subset of $\mathcal{C}$, and deciding
whether such a subset of a given size exists is intractable. We therefore use a
greedy ordering (\S\ref{sec:impl}) that visits low-uniqueness candidates first,
removing constraints with many alternative proofs before those that are the sole
proof source for some fact. The ordering affects only the \emph{size} of the
removal set, not its \emph{soundness}: every removal is individually
verified by proof search.

\parh{Complexity.}
Let $m = |\mathcal{C}_{\mathsf{f}}|$ denote the number of candidates emitted by
the filter and $\bar D$ the average size of the subgraph reachable from a target
fact in $G$. Each removal attempt runs $|\Obl(c)|$ proof searches, each bounded
by $O(\bar D)$ under memoized cycle-aware traversal
(\Alg~\ref{alg:proof-search}). With $|\Obl(c)|$ bounded by the constant
constraint arity, one attempt costs $O(\bar D)$, and the entire loop runs in
$O(m \cdot \bar D)$ time. In the worst case $\bar D = O(m)$, giving $O(m^2)$; in
circuits where $\bar D$ is small, the loop is near-linear.

\medskip

For removals justified by the implemented transfer and entailment
rules, soundness proceeds in two steps. Theorem~\ref{thm:prov-soundness}
shows that any fact the search declares provable under a mask holds for every
witness of the unmasked constraints. Combined with a per-rule
entailment-soundness precondition on the filter, Theorem~\ref{thm:soundness}
then concludes that those removed constraints are redundant.

\parh{Provability soundness.}
Let $R \subseteq \mathcal{C}$ and let
$M_R = \{\,\nu = (c, f, P) \in N : c \in R\,\}$ be the mask induced
by removing $R$.

\begin{theorem}[Provability soundness]
\label{thm:prov-soundness}
If a fact $f = (v, k)$ is provable in $G$ under mask $M_R$, then every
witness $w \models \mathcal{C} \setminus R$ assigns $v$ a value
satisfying fact $k$.
\end{theorem}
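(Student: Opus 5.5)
We argue by induction on the height of a derivation. Provability under $M_R$ was defined as the least fixed point of the AND/OR clauses, so a provable fact $f$ comes with a finite, acyclic derivation tree. Its root is $f$. Each fact node in the tree is justified by some unmasked constraint node $\nu=(c,f',P)\in N\setminus M_R$, and every premise in $P$ is in turn a provable fact with a strictly smaller tree. The leaves are axiom nodes with $P=\emptyset$.

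We strengthen the claim to all provable facts: for every fact $f'=(v',k')$ provable under $M_R$ and every $w\models\mathcal{C}\setminus R$, the value $w(v')$ satisfies $k'$. We induct on the height of the derivation.

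\textbf{Key step: a local lemma about constraint nodes.} For every node $\nu=(c,(v,k),P)\in N$ and every witness $w$ with $w\models c$ such that $w$ satisfies every premise fact in $P$, the value $w(v)$ satisfies $k$. The argument uses the construction of Algorithms~\ref{alg:build-prov-graph} and~\ref{alg:min-premises}.
\begin{itemize}
\item The node was added because $\mathrm{transfer}_c(\mathit{state}(P))(v)\Vdash (v,k)$. Here $\mathit{state}(P)$ sets $v\mapsto\top$ and gives each other variable the narrowest value consistent with the facts in $P$ (unconstrained atoms go to $\top$).
\item If $w$ satisfies all of $P$, then $\mathit{state}(P)$ over-approximates $w$. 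The component for $v$ is $\top$, which contains $w(v)$. Each other variable's component is, by construction, consistent with exactly the atoms that $w$ satisfies.
\item Local soundness of $\mathrm{transfer}_c$ (the precondition of \S\ref{sec:abstract-interp}), applied with $w\models c$, gives that $\mathrm{transfer}_c(\mathit{state}(P))$ still over-approximates $w$. So $w(v)\in\gamma(\mathrm{transfer}_c(\mathit{state}(P))(v))$.
\item The definition of $\Vdash$ then says $w(v)$ satisfies $k$.
\end{itemize}
Two further points need checking. The WrapGuard discipline guarantees that any integer reading in these facts is the signed-window reading of the field value, so the concretization is about $w$'s actual field elements. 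When MinPremises terminates early the invariant is maintained trivially, because the loop only ever drops a premise whose removal preserves the entailment.

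\textbf{The induction itself.} Suppose $f=(v,k)$ is provable via an unmasked node $\nu=(c,f,P)$. Since $\nu\notin M_R$, its constraint satisfies $c\notin R$, so every $w\models\mathcal{C}\setminus R$ satisfies $c$. Each premise $p\in P$ is provable with a strictly smaller derivation, so by the induction hypothesis $w$ satisfies $p$. The local lemma then yields that $w(v)$ satisfies $k$. The base case is an axiom node: $P=\emptyset$, and the lemma applies directly with $\mathit{state}(\emptyset)$.

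It remains to connect this to the search actually used. We would note that a \textsc{true} answer from Algorithm~\ref{alg:proof-search} yields such an acyclic derivation. The visiting set forbids cycles, and a memoized \textsc{true} for a fact was itself produced by a completed, acyclic subsearch. This last point deserves a brief argument, since a fact memoized as proven inside one branch is reused elsewhere. That reuse is harmless because the subsearch that proved it was finished and cycle-free, so its derivation can simply be spliced in.

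\textbf{Main obstacle.} The hard step is the local lemma, specifically the claim that $\mathit{state}(P)$ over-approximates $w$ using only the atoms in $P$. This needs care on two counts. First, known-bits and interval atoms of the same variable must be combined into a single abstract value whose concretization is exactly the intersection of the atoms' concretizations, and not some reduced value that silently uses facts outside $P$. Second, the reduced-product cross-propagation inside $\mathrm{transfer}_c$ must itself be locally sound. We would take both as part of the stated local-soundness precondition, and spell out the meet on atoms explicitly.
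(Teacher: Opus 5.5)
Your proposal is correct and follows essentially the same route as the paper's sketch. Both argue by induction on the derivation tree: an unmasked node gives $c \in \mathcal{C}\setminus R$, the axiom case follows from local soundness applied at $\mathit{state}(\emptyset)=\top$, and the inductive case uses the induction hypothesis to make $\mathit{state}(P)$ over-approximate $w$, then the \textsc{MinPremises} postcondition plus local soundness; your extra remarks (that a \textsc{true} answer of the memoized cycle-aware search yields a well-founded derivation, and that the atoms in $P$ must be combined without importing information from outside $P$) make explicit points the paper leaves implicit but do not change the argument.
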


\parh{Debloating soundness.}
Iterating \Alg~\ref{alg:commit-loop}'s removals in order produces
a \emph{provenance-grounded removal set}:

\begin{definition}[Provenance-grounded removal]
\label{def:grounded}
A subset $\mathcal{R} \subseteq \mathcal{C}$ is \emph{provenance-grounded} if
there exists an ordering $c_1, c_2, \ldots, c_k$ of its elements such that each
$c_i$ is entailed (so its obligations $\Obl(c_i)$ are defined) and, writing
$\mathcal{R}_i = \{c_1, \ldots, c_i\}$, every fact in $\Obl(c_i)$ is provable in
$G$ under mask $M_{\mathcal{R}_i}$ for every $i \in \{1, \ldots, k\}$.
\end{definition}

Each entailment rule is engineered to satisfy an \emph{entailment soundness}
property: for every entailed constraint $c$ and every witness $w$ that satisfies
every fact in $\Obl(c)$ at its proven abstract value, $w$ also satisfies $c$.
The theorem below is therefore conditional on the rule schemas covered
by the implementation and on their local soundness obligations; unsupported
constraint shapes are conservatively retained rather than certified by this
argument.

\begin{theorem}[Debloating soundness]
\label{thm:soundness}
If $\mathcal{R}$ is provenance-grounded then $\mathcal{R}$ is a safe
debloating (\Def~\ref{def:debloating}): \reved{$\mathcal{C} \setminus
\mathcal{R} \vdash \mathcal{R}$}, equivalently every $w$ satisfies
$\mathcal{C} \setminus \mathcal{R}$ iff it satisfies $\mathcal{C}$.
\end{theorem}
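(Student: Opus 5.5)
The plan is to reduce Theorem~\ref{thm:soundness} to Theorem~\ref{thm:prov-soundness} plus the per-rule entailment-soundness precondition. Fix a provenance-grounded $\mathcal{R}$ with witnessing order $c_1,\ldots,c_k$ and prefixes $\mathcal{R}_i$. The ``iff'' form follows once we show $\mathcal{C}\setminus\mathcal{R}\vdash\mathcal{R}$. One direction is trivial: $w\models\mathcal{C}$ implies $w\models\mathcal{C}\setminus\mathcal{R}$. For the other direction, it suffices to show that every $w\models\mathcal{C}\setminus\mathcal{R}$ satisfies each $c_i$.

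\textbf{Step 1 (monotonicity of masks).} I would first note that provability under a mask is antitone in the mask. If $M\subseteq M'$, any acyclic derivation rooted at axiom nodes that avoids $M'$ also avoids $M$. This follows directly from the least-fixed-point definition of provability. I will use the monotonicity only in the easy direction; the real issue is the opposite one. The grounding condition gives provability of $\Obl(c_i)$ under $M_{\mathcal{R}_i}$. Theorem~\ref{thm:prov-soundness} applied with $R=\mathcal{R}_i$ then yields that every $w\models\mathcal{C}\setminus\mathcal{R}_i$ satisfies every fact in $\Obl(c_i)$. But we need this for $w\models\mathcal{C}\setminus\mathcal{R}$, a possibly larger witness set, since later removals $c_{i+1},\ldots,c_k$ are also gone.

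\textbf{Step 2 (backward induction on $i$).} I would therefore induct downward from $i=k$ to $i=1$. The claim is that every $w\models\mathcal{C}\setminus\mathcal{R}$ satisfies $c_i,\ldots,c_k$, i.e.\ $w\models\mathcal{C}\setminus\mathcal{R}_{i-1}$.
\begin{itemize}
\item \emph{Base case.} For $i=k$, $\mathcal{R}_k=\mathcal{R}$. Theorem~\ref{thm:prov-soundness} under $M_{\mathcal{R}}$ shows $w$ satisfies every fact in $\Obl(c_k)$, at the abstract values recorded in $\sigma^\star$ (the fact nodes carry exactly those bounds). Entailment soundness then gives $w\models c_k$, so $w\models\mathcal{C}\setminus\mathcal{R}_{k-1}$.
\item \emph{Inductive step.} Assuming $w\models\mathcal{C}\setminus\mathcal{R}_i$, Theorem~\ref{thm:prov-soundness} with $R=\mathcal{R}_i$ together with the grounding condition for $c_i$ gives that $w$ meets $\Obl(c_i)$. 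Entailment soundness then gives $w\models c_i$, hence $w\models\mathcal{C}\setminus\mathcal{R}_{i-1}$.
\end{itemize}
At $i=1$ we obtain $w\models\mathcal{C}$, which is the claim. The ordering is essential: each $c_i$ is certified under a mask containing all earlier removals, and the induction discharges later ones first.

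\textbf{Main obstacle.} The delicate point is that the grounding condition only masks $\mathcal{R}_i$, not all of $\mathcal{R}$. So a proof of $\Obl(c_i)$ might route through constraint nodes of some later $c_j$ with $j>i$. The backward induction handles this because, by the time we treat $c_i$, the witness is already known to satisfy $c_j$ for all $j>i$. This is exactly why premise asymmetry is avoided: $c_{i}$'s justification may use $c_j$, but $c_j$'s justification was checked with $c_i$ masked. The remaining bookkeeping is to confirm that the facts in $\Obl(c_i)$ are interpreted at their $\sigma^\star$ values in both theorems, so that entailment soundness applies verbatim. I would make this explicit by identifying each fact node $(v,k)$ with the concrete predicate given by the corresponding atom of $\sigma^\star(v)$.
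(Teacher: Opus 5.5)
Your proposal is correct and is essentially the paper's argument. The paper also starts from $w\models\mathcal{C}\setminus\mathcal{R}_k$ and peels off $c_k,c_{k-1},\ldots,c_1$ in turn: it uses Theorem~\ref{thm:prov-soundness} under $M_{\mathcal{R}_i}$ to discharge $\Obl(c_i)$, then entailment soundness to get $w\models c_i$ and hence $w\models\mathcal{C}\setminus\mathcal{R}_{i-1}$. Your Step~1 on mask monotonicity is never actually used in Step~2, so you can drop it.
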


\reved{Proof sketches for both theorems appear in Appendix~\ref{sec:appendix-proofs}.}

\parh{Security implication.}
Theorem~\ref{thm:soundness} guarantees that the debloated circuit $\mathcal{C}
\setminus \mathcal{R}$ admits exactly the same set of valid witnesses as the
original $\mathcal{C}$ for the certified removal set $\mathcal{R}$, whose
members are discharged by the implemented entailment rules. In the adversarial
setting of a zero-knowledge proof system, this means a malicious prover who
could not forge a valid witness before debloating still cannot do so afterward:
the soundness guarantee of the proof system is preserved for the
certified removals under the modeled constraint semantics.\looseness=-1

\parh{Remark on stale proofs.}
$\Obl(c_i)$ need only be provable under the mask \emph{at step $i$}, not
under the final mask. A later removal may invalidate $c_i$'s original proof
without invalidating its removal. The removal loop therefore needs no
bookkeeping for prior proofs: checking each obligation once against the
cumulative mask suffices.

\section{Implementation}
\label{sec:impl}

We implement the approach of \S\ref{sec:approach} in Rust in roughly
$30\text{k}$ lines. This section reports the engineering details in our
implementation.

\parh{Loaders and emitters.}~The tool supports two constraint formats.
\begin{itemize}[leftmargin=*]
  \item \textbf{R1CS.} The loader consumes Circom source via an
    ANTLR4~\cite{antlr4}-generated parser to produce a flat constraint vector.
    The emitter re-generates Circom source so that existing tool-chains can
    prove and verify the debloated circuit unchanged.
  \item \textbf{PLONKish.} halo2 circuits are built programmatically inside Rust
  rather than from a source file, so there is nothing to parse. The loader
  instead observes halo2's mock prover during a single synthesis pass and
  records every gate, lookup, and copy constraint into a JSON description that
  preserves the circuit structure. The emitter rebuilds the JSON back into a
  halo2 circuit, so the existing prover and verifier run unchanged.
\end{itemize}
Notably, when loading, we assume that model weights (encoded in PLONKish's
advice columns) are not known to our tool. We therefore do not leverage any
weight-specific optimizations.


\parh{Greedy removal ordering.}~\Alg~\ref{alg:commit-loop}'s ordering $\pi$
is realized as a \emph{uniqueness score}: for each candidate $c$, we count the
fact nodes for which $c$ is the sole unmasked source in the provenance graph.
Candidates are visited in \emph{ascending} score order, so constraints with
score~$0$ (those whose facts have alternative proofs) are tried first, while
high-score constraints that prop up many facts uniquely are deferred to the end.
This is the standard greedy surrogate for set cover, computed once from the
initial graph.

\parh{Dead elimination.}~After debloating, some constraints may reference only
auxiliary variables that are disconnected from the circuit's designated boundary
variables (e.g., public and private inputs/outputs). We apply a conservative
cleanup pass only to such unreachable auxiliary structure: a
backward-reachability pass starts from these boundary variables, marks the
constraints that can affect them as live, and treats the deletion of the rest as
preserving the observable relation. A subsequent column-pruning pass identifies
advice and fixed columns that no surviving constraint references and marks them
for removal.

\parh{Mechanized soundness proof.}~We mechanize the soundness argument
for certified removals in Rocq~\cite{Rocq}. The development defines the modeled
constraint semantics and abstract domains, proves local soundness for the
implemented transfer and entailment rules, and independently checks the
analyzer-emitted certificate for each removal set $\mathcal{R}$, \reved{establishing
the safe-debloating property $\mathcal{C} \setminus \mathcal{R} \vdash
\mathcal{R}$ (\Def~\ref{def:debloating}): every witness of the debloated
circuit satisfies the removed checks, so both circuits accept the same
witnesses}. This certification
covers the constraint kinds discharged by our implementation; unsupported
shapes are retained unless a covered rule certifies their removal.


\section{Evaluation}
\label{sec:eval}

\begin{table*}[t]
\centering
\caption{RQ1: Constraint reduction and proof-system impact on named benchmarks.
{$|C|$ is the number of loaded constraints; Red\% is the fraction removed by
debloating; $T_P$/$T_V$ are baseline prove/verify time; $|\pi|$ is the proof
size; $\Delta P$/$\Delta V$/$\Delta|\pi|$ are the relative changes in
prove/verify time, and proof size after debloating.}}
\label{tab:rq1}
\small
\setlength{\tabcolsep}{3pt}
\resizebox{\textwidth}{!}{
\begin{tabular}{l|rrrrrrrr|rrrrrrrr}
\toprule
& \multicolumn{8}{c|}{\textbf{zkml}} & \multicolumn{8}{c}{\textbf{ezkl}} \\
\cmidrule(lr){2-9} \cmidrule(lr){10-17}
Model & $|C|$ & Red\% & $T_P$ & $\Delta P$ & $T_V$ & $\Delta V$ & $|\pi|$ & $\Delta|\pi|$ & $|C|$ & Red\% & $T_P$ & $\Delta P$ & $T_V$ & $\Delta V$ & $|\pi|$ & $\Delta|\pi|$ \\
\midrule
MNIST-MLP & 6.30K & 22.30\% & 569\,s & $-$16.80\% & 13.70\,s & $-$25.30\% & 8.10\,MB & $-$2.70\% & 416.50K & 3.20\% & 174\,s & $-$20.70\% & 2.33\,s & 0.00\% & 7.90\,KB & $-$31.60\% \\
LeNet-5 & 338.10K & 31.70\% & 994\,s & $-$41.20\% & 9.82\,s & $-$59.10\% & 3.00\,MB & $-$40.00\% & 2.20M & 15.90\% & 298\,s & $-$32.00\% & 2.33\,s & +0.10\% & 15.20\,KB & $-$42.30\% \\
MobileNet-v2 & --- & --- & --- & --- & --- & --- & --- & --- & 25.30M & 22.80\% & 9545\,s & $-$4.10\% & 0.93\,s & $-$0.50\% & 275.00\,KB & $-$3.60\% \\
PTB-RNN & 28.10K & 43.10\% & 4.14\,s & $-$36.70\% & 32.00\,ms & $-$25.20\% & 58.20\,KB & $-$36.90\% & 3.80M & 12.90\% & 1561\,s & $-$62.90\% & 2.37\,s & $-$1.30\% & 93.80\,KB & $-$74.20\% \\
PTB-GRU & 78.80K & 44.70\% & 12.60\,s & $-$42.70\% & 94.80\,ms & $-$44.90\% & 178.20\,KB & $-$43.80\% & 12.80M & 12.10\% & 4.50\,h & $-$1.10\% & 1.06\,s & $-$6.60\% & 462.40\,KB & $-$1.20\% \\
PTB-LSTM & 76.10K & 48.70\% & 29.90\,s & $-$43.70\% & 0.16\,s & $-$48.10\% & 260.20\,KB & $-$45.60\% & 16.20M & 11.50\% & 7.60\,h & $-$20.10\% & 1.16\,s & +3.40\% & 730.50\,KB & $-$1.00\% \\
BERT-tiny & 57.20K & 37.40\% & 9.21\,s & $-$39.40\% & 71.00\,ms & $-$39.20\% & 137.00\,KB & $-$40.90\% & 10.20M & 24.90\% & 4783\,s & $-$72.80\% & 0.97\,s & $-$10.70\% & 138.10\,KB & $-$72.90\% \\
\bottomrule
\end{tabular}}
\end{table*}

We evaluate our approach along three dimensions:

\begin{itemize}[leftmargin=*]
  \item \textbf{RQ1 (Effectiveness).} How much does debloating reduce the number
    of constraints, and does that reduction translate into prover, verifier, and
    proof-size savings?
  \item \textbf{RQ2 (Scalability).} How does analysis time and constraint
    reduction scale as circuit size grows?
  \item \textbf{RQ3 (Comparison).} How does our tool compare to existing
    constraint-reduction tools and to exact SMT-based entailment checking?
\end{itemize}

\subsection{Experimental Setup}
\label{sec:eval-setup}

\parh{Hardware.}
All experiments run on a server equipped with an AMD EPYC~9754 and 768~GiB of
DDR5 RAM with Ubuntu~24~LTS.

\parh{Corpus.}
{\emergencystretch=1em
Our corpus covers seven architectures: MNIST-MLP (fully connected) and
LeNet-5 (convolutional) for digit recognition~\cite{mnist_and_lenet};
MobileNet-v2 (depthwise-separable convolutions)~\cite{mobilenetv2};
PTB-RNN (Elman)~\cite{rnn}, PTB-GRU~\cite{cho-etal-2014-learning}, and
PTB-LSTM~\cite{lstm} on Penn Treebank; and BERT-tiny (a single-layer
transformer encoder)~\cite{turc2019wellreadstudentslearnbetter}.

\noindent Each architecture is compiled through two PLONKish pipelines:
\texttt{ezkl}~\cite{ezkl}, which compiles ONNX models, and
\texttt{zkml}~\cite{zkml}, which composes hand-crafted PLONKish gadgets. Six of
the seven architectures run through both pipelines; MobileNet-v2 runs through
\texttt{ezkl} only, as \texttt{zkml} does not support depthwise-separable
convolutions. Running the same architecture through two pipelines separates
architecture-dependent redundancy (a property of the neural network model) from
pipeline-dependent redundancy (a property of the circuit generation framework).
Each model uses the published architecture dimensions where memory permits; when
the zkml prover's footprint exceeds the server's 768 GiB capacity, we use the
largest scale that completes successfully (hidden=50 for zkml/PTB-*,
$d_\text{model}=32$ for zkml/BERT-tiny). Circuit complexity at these sizes
ranges from $\sim\!6$k constraints to $\sim\!25$M constraints.\looseness=-1\par}

\parh{Proving and verifying.}
For end-to-end proof-system measurements in RQ1, we use Axiom's halo2
implementation~\cite{axiom-halo2} as it provides better memory efficiency,
and record CPU time for both proving and verifying.

\parh{Scaling.}
For RQ2, we parameterize each architecture family by a size multiplier
$\{1, 2, 4, 8, 16, 32\}\times$ that varies the principal dimension (hidden
width for recurrent/transformer models, channel count for CNNs, layer width
for FC networks), producing circuits that span the same structural patterns
at increasing scale. For each configuration we record our approach's analysis
wall time and the resulting constraint reduction rate. The largest instance
reaches $\sim\!38$M constraints at $32\times$.

\parh{Comparison baselines.}
For RQ3, we compare against five existing tools and one exact method. Circom
v2.2.3 with full optimization
(\texttt{O2})~\cite{belles2023circomacircuitdescription}, Distilling
Constraints~\cite{distillconstraints}, and CirC~\cite{Ozdemir2022CirCCI}
target R1CS constraint systems; since our tool also supports R1CS, the
comparison is direct. Circom~O2 and Distilling consume Circom source; CirC
compiles ZoKrates~\cite{zokrates} programs to R1CS, so we re-implement each
circuit as a semantically equivalent ZoKrates program.
\reved{Clap~\cite{stronati2024clapsemanticpreservingoptimizingedsl} and the Halo2
analyzer (korrekt)~\cite{halo2-analyzer} consume PLONKish circuits instead,
so the comparison uses the \texttt{zkml} compilation of the same
architectures: a converter reproduces each circuit's gates and lookups in
Clap's eDSL and runs Clap unmodified; korrekt requires Rust circuit sources
rather than extracted circuits, so we faithfully reimplement its three
structural analyses on the same circuits.} \reved{Finally, we
run the exact method our analysis approximates: deciding each redundancy
query $\mathcal{C} \setminus \{c\} \vdash c$ with cvc5~\cite{cvc5} on a
faithful QF\_NIA encoding. The encoding asserts gate polynomials as
equations, expands lookups into per-row disjunctions, and encodes copy
constraints as equalities. Each circuit receives a
24-hour budget; queries the solver cannot decide conservatively keep the
constraint.}\looseness=-1

The R1CS tools additionally need an R1CS corpus. To the best of our
knowledge, no mature R1CS framework exists for neural-network inference
circuits. The commonly
cited \texttt{circomlib-ml}~\cite{circomlib-ml} on Github is toy-level and
under-constrained: its gadgets omit checks needed for soundness. Thus, we
prepare five circuits (MLP, LeNet, RNN, GRU, LSTM) by augmenting
\texttt{circomlib-ml}'s templates with checks modeled after \texttt{zkml}'s
constraint patterns, producing sound circuits with realistic redundancy
structure. MobileNet and BERT are excluded because neither can be encoded using
gadgets provided by \texttt{circomlib-ml}. The sizes of those circuits are
intentionally small to allow all baselines to complete within a reasonable
time.\looseness=-1

The \emph{No-Graph} ablation removes each candidate, re-runs abstract
interpretation to fixpoint, and restores the candidate if entailment fails.
Each of the $|\mathcal{C}_{\mathsf{f}}|$ filter candidates triggers a fresh
$O(n \cdot d \cdot h)$ fixpoint: $O(n^2 \cdot d \cdot h)$ overall, versus
$O(n \cdot d^3)$ graph construction and an $O(m \cdot \bar D)$ commit loop
(\S\ref{sec:approach}). Its 4-hour budget per circuit is roughly $20\times$
our full approach's actual time on RQ3 circuits.\looseness=-1

\subsection{RQ1: Effectiveness}
\label{sec:eval-rq1}

Table~\ref{tab:rq1} reports constraint reduction, proving and verification
time, and proof size for each architecture--pipeline pair.

\begin{figure*}[t]
  \centering
  \includegraphics[width=\textwidth]{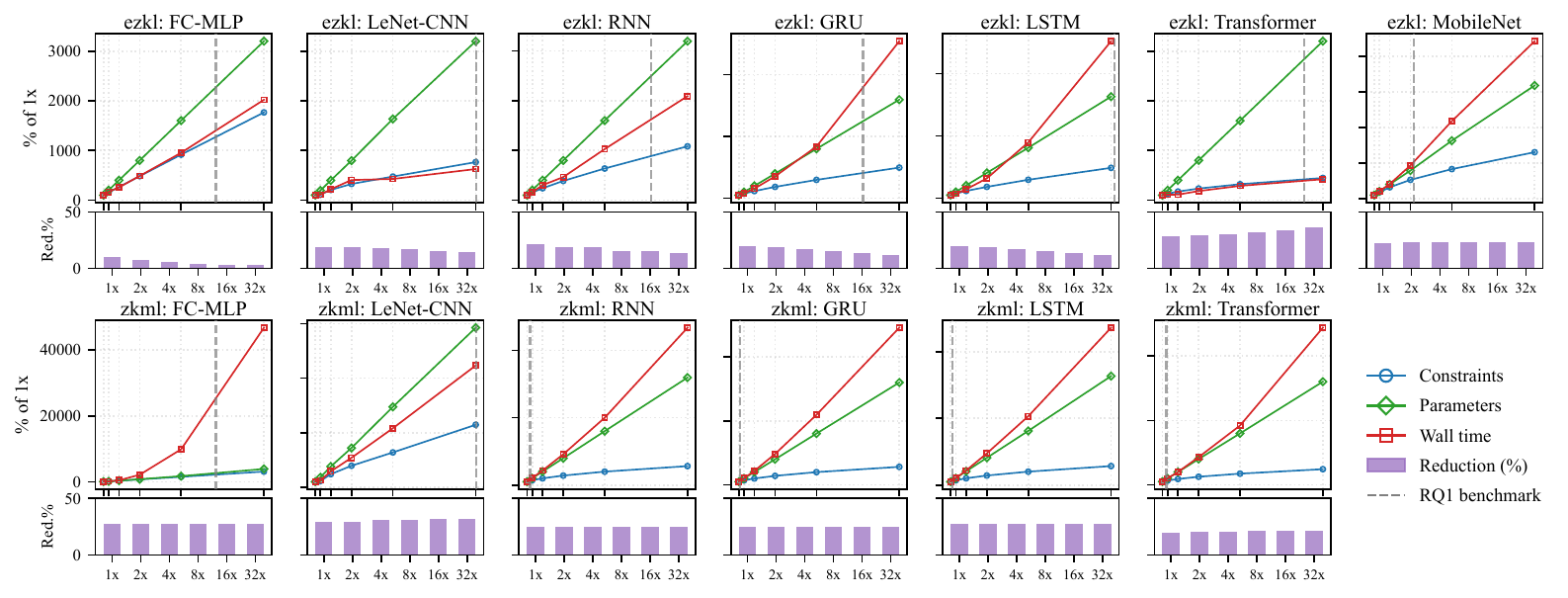}
  \caption{RQ2: Analysis time and constraint reduction as circuit size scales
    from $1\times$ to $32\times$. {Top row: \texttt{ezkl}; bottom row:
    \texttt{zkml}. In each panel, the upper chart plots constraints, parameters,
    and time relative to the $1\times$ baseline (log scale); the lower chart
    plots reduction percentage (linear scale). The dashed vertical line marks
    the parameter sizes of RQ1 models.}}
  \Description{Thirteen small-multiple plots arranged in two rows. Seven ezkl
    model families appear in the top row and six zkml families in the bottom
    row. Each panel shows constraints, parameters, and wall time relative to
    the one-times baseline from one-times through thirty-two-times scale, plus
    a lower bar chart of reduction percentage. Constraint counts and wall time
    increase with scale; reduction percentages remain broadly stable, with
    modest declines for several ezkl recurrent and fully connected models. A
    dashed vertical line marks the corresponding RQ1 benchmark size.}
  \label{fig:scaling}
\end{figure*}

\parh{Constraint reduction.}
Debloating removes constraints on every benchmark in our corpus. On
\texttt{ezkl}, reduction ranges from $3.20$\% to $24.90$\% (median $12.90$\%);
on \texttt{zkml}, from $22.30$\% to $48.70$\% (median $40.25$\%). The
consistently higher reduction on \texttt{zkml} is expected: \texttt{zkml}'s
hand-crafted gadgets compose range checks and lookup constraints at the
constraint level, producing inter-constraint entailment opportunities that
abstract interpretation can exploit. \texttt{ezkl}'s gadgets, by
contrast, fuse multiple operations into shared layouts where each soundness
check is emitted once and reused across many rows, leaving fewer
constraint-level redundancies for the debloater to detect. Within each
pipeline, recurrent and transformer architectures (PTB-*, BERT-tiny) yield
higher reduction than feed-forward ones (MNIST-MLP), consistent with unrolled
timesteps repeating identical gadget patterns whose intermediate checks
accumulate redundancy.\looseness=-1

\parh{Proving time.}
Prove-time savings appear on every benchmark, but the magnitude varies widely:
$16.80$--$43.70$\% on \texttt{zkml} and $1.10$--$72.80$\% on \texttt{ezkl}.
Several properties of PLONKish explain the spread.

The prover commits one polynomial per advice column, of length $N = 2^k$ where
$k$ (\texttt{logrows}) is the smallest power of two fitting the circuit; each
\emph{distinct} lookup argument adds further committed polynomials. The prover
therefore pays per advice column and per distinct lookup type, not per
constraint, so prove-time savings arise only when removals (i)~prune an entire
advice column, or (ii)~eliminate a distinct lookup argument. Witness-generation
also contributes to timing, scaling roughly with the number of cells in the
table (advice columns $\times$ rows used).\looseness=-1

On \texttt{zkml}, debloating predominantly removes lookup invocations
($33$--$50$\% fewer across the six benchmarks), while gate and advice-column
counts remain nearly unchanged. Because \texttt{zkml} emits a fresh lookup
argument per gadget call site, trimming invocations directly collapses the
\emph{number} of distinct lookup arguments. Prove-time tracks lookup reduction
closely ($17$--$44$\%, mean $36.75$\%).\looseness=-1

On \texttt{ezkl}, the dominant effect is column pruning. \texttt{ezkl}
shares lookup arguments across many call sites of the same non-linearity, so
distinct-lookup elimination is a smaller, secondary effect. On BERT-tiny and
PTB-RNN, debloating prunes $38$\% and $62$\% of advice columns and collapses
$29$\% and $17$\% of the distinct lookup arguments, yielding $72.80$\% and
$62.90$\% prove-time reductions. On the
remaining \texttt{ezkl} models, fewer columns are eliminated and speedups are
$1$--$32$\%.\looseness=-1

\parh{Verifier time and proof size.}
Verifier time is small in absolute terms and varies with system-level factors;
we observe $+3.40$\% to $-59.10$\% changes across benchmarks. Proof-size
reductions range from $1$--$74$\% on \texttt{ezkl} and $3$--$46$\% on
\texttt{zkml}, generally tracking the degree of column or lookup removal
discussed above.\looseness=-1

\begin{rqanswer}
\textbf{Answer to RQ1.}
Debloating finds and removes redundant constraints on all $13$ pipeline-model
pairs. The debloater reduces constraint counts by up to $48.70$\%, effectively
saving prove time and proof size.
\end{rqanswer}

\subsection{RQ2: Scalability}
\label{sec:eval-rq2}

\F~\ref{fig:scaling} plots analysis time and constraint reduction as each
architecture scales from $1\times$ to $32\times$.

\parh{Analysis time.}
{\emergencystretch=1em
\S\ref{sec:approach} establishes that the abstract interpretation and provenance
construction phases are near-linear in constraint count~$n$, while the commit
loop is $O(n \cdot \bar D)$ with worst case $O(n^2)$. Empirically, fitting $T
\propto n^\alpha$ between the $1\times$ and $32\times$ endpoints yields
exponents $\alpha \in [0.90, 1.70]$ on \texttt{ezkl} and $\alpha \in [1.30, 2.50]$
on \texttt{zkml}. The higher exponents on \texttt{zkml} are consistent with
deeper provenance chains in the dense fully-connected layers (increasing the
per-query cost~$\bar D$) and larger arity per dot-product constraint (increasing
the $d$). In absolute terms, the smallest $1\times$ circuit (zkml/FC-MLP, $476$
constraints) completes in $0.10$\,s, and the largest $32\times$ circuit
(ezkl/MobileNet, $38.40$M constraints) completes in $2.90$\,h. This indicates that
our tool maintains efficiency as circuits scale, and since debloating is a
one-time, offline transformation, even multi-hour analyses amortize over
arbitrarily many subsequent proof generations.\looseness=-1\par}

\parh{Reduction stability.}
On \texttt{zkml}, reduction percentages are remarkably stable across scale:
RNN, GRU, and LSTM vary by less than $0.30$ percentage points from $1\times$ to
$32\times$ ($25.20$\%, $25.00$\%, and $27.30$\%, respectively). FC-MLP and
Transformer show similarly narrow ranges ($27.30$--$27.60$\% and
$19.80$--$21.40$\%). LeNet-CNN increases slightly from $29.20$\% to $31.70$\%.
On \texttt{ezkl}, MobileNet ($21.90$--$23.10$\%) and Transformer
($28.00$--$36.40$\%) are similarly stable or improving; FC-MLP and the recurrent
families show a modest decline ($8$--$10$ percentage points) at high multipliers.

The difference traces to how each pipeline scales. \texttt{zkml} composes
self-contained gadgets, so scaling the principal dimension replicates identical
gadget instances and their internal redundancy in each step. On \texttt{ezkl},
the absolute number of trimmed constraints also grows with scale, but the total
constraint count grows faster for some families (FC-MLP, RNN, GRU, LSTM),
diluting the percentage. For MobileNet and Transformer, total and trimmed
constraints scale at similar rates, keeping the ratio stable or improving.\looseness=-1

\begin{rqanswer}
\textbf{Answer to RQ2.}
Analysis time scales efficiently with constraint count, and reduction
percentages stay stable across scale on both pipelines.
\end{rqanswer}

\subsection{RQ3: Comparison}
\label{sec:eval-rq3}

\begin{table}[t]
\centering
\caption{RQ3: Comparison with existing constraint-reduction tools on five
architectures.}
\label{tab:rq3}
\small
\setlength{\tabcolsep}{3pt}
\begin{tabular}{lr|rrrrr|rr}
\toprule
Circuit & $|C|$ & O2 & Distil. & CirC & \reved{Clap} & \reved{korrekt} & No Graph & Ours \\
\midrule
MLP & 5,463 & 0\% & 0\% & 0\% & 0\% & 0\% & timeout & 22.9\% \\
LeNet & 60.2K & 0\% & 0\% & 0.1\% & 0\% & 0\% & timeout & 29.6\% \\
RNN & 19.7K & 0.2\% & 0.2\% & 0.4\% & 0\% & 0\% & timeout & 15.9\% \\
GRU & 10.8K & 0\% & 0\% & 0.2\% & 0\% & 0\% & timeout & 42.8\% \\
LSTM & 10.8K & 0\% & 0\% & 0.3\% & 0\% & 0\% & 51.6\% & 51.6\% \\
\bottomrule
\end{tabular}
\end{table}

Table~\ref{tab:rq3} reports the constraint reduction achieved by each tool on
the five RQ3 architectures. $|C|$ is the R1CS constraint count (baseline).
``O2'' is Circom~O2; ``Distilling'' is Albert et al.'s approach; ``CirC'' is
CirC's IR optimizations; \reved{``Clap'' and ``korrekt'' run on the
\texttt{zkml} compilation of the same architectures}; and ``No Graph'' is our
debloater with the provenance graph disabled, falling back to per-candidate
re-verification (\S\ref{sec:eval-setup}). The final column is our full
debloater.\looseness=-1

\parh{Circom~O2 and Distilling.}
Circom~O2 and Distilling achieve no reduction on four of the five circuits. The
sole exception is PTB-RNN, where both remove $0.2$\% of constraints, which is a
negligible improvement. This is expected: both tools target algebraic
simplifications (constant folding, linear-combination merging) that do not
reason about cross-constraint entailment. \looseness=-1

\parh{CirC.}
We render each circuit in ZoKrates and compile through CirC with and without
IR optimizations, including constant folding, variable scalarization, and
oblivious array elimination. CirC achieves at most $0.4$\% reduction: these ML
circuits lack the program features (oblivious arrays, hashing, RAM) targeted by
its passes.\looseness=-1

\parh{\reved{Clap and Halo2-Analyzer (korrekt).}}
\reved{Both tools remove nothing on any circuit (Table~\ref{tab:rq3}). The zeros are
structural, not incidental. Korrekt flags gates that are \emph{statically
dead}: polynomials that abstractly evaluate to zero on every active row via
zero constants, zero scale factors, or known-zero fixed cells. We verified
that every gate expression in these circuits abstract-evaluates to
non-constant, because \texttt{zkml} activates each gate template exactly on
the rows it uses and embeds only non-zero weights. Clap's range-check
elimination drops a check only when a syntactically identical check on the
same signal appears elsewhere; after equality merging, every lookup in these
circuits enforces a distinct expression, leaving nothing to
drop. Both tools ask whether a check is ever active or
duplicated; neither asks whether an active, unique check is entailed, which
is where the redundancy resides (22.3--48.7\% on the same circuits,
Table~\ref{tab:rq1}).}\looseness=-1

\parh{\reved{Exact SMT (QF\_NIA).}} \reved{At the scale of Table~\ref{tab:rq3},
the exact method is out of reach. We therefore run it on scaled-down instances
of the five RQ3 architectures (8--680 neurons; 116--7.2K constraints). Within
the 24-hour budget, only the
smallest instance, MLP (116 constraints), terminates, and it takes 11.4 hours to
do so. Its removal set exactly matches the one our tool computes in 0.06\,s. The
other four exceed the budget. Exact entailment is thus the right specification
but not a scalable algorithm: our provenance-guided approximation reaches the
same answer five orders of magnitude faster.}\looseness=-1

\parh{No-Graph ablation.}
On LSTM, No-Graph reaches the same $51.6$\% reduction (in $42$~min) as our full
debloater; on the other four circuits, it exhausts the 4-hour budget, each
candidate triggering a fixpoint recomputation over the entire constraint set.
LSTM is the outlier for a structural reason: its number of constraints
entering analysis (total minus equality constraints) is the smallest in the
corpus ($966$ versus $1{,}140$--$13{,}014$ for the others), so each
per-candidate fixpoint is correspondingly cheap. Our provenance graph replaces
every fixpoint with an on-graph proof search over the candidate's premises,
finishing all five circuits in $4$\,s (LSTM) to $11$\,min (LeNet). At
larger scale in RQ1 (tens of millions of constraints), the same $O(n^2)$ cost
may take days; the graph is what makes that scale tractable.\looseness=-1

\parh{Our tool.}
Our tool reduces constraints by $15.90$\% (RNN) to $51.60$\% (LSTM), the only
substantial reductions within budget across all five circuits. GRU and LSTM
compose multiple checked gates over shared hidden-state signals per timestep,
creating denser entailment opportunities than simpler recurrent and
feed-forward circuits.\looseness=-1

\begin{rqanswer}
\textbf{Answer to RQ3.}
\reved{Syntactic tools achieve at most $0.40$\% reduction (Circom~O2, Distilling,
CirC) or none at all (Clap, korrekt); exact SMT terminates only on a
116-constraint instance, confirming our result after 11.4 hours.} Our tool
alone removes redundant checks at substantial rates, achieving
$15.90$--$51.60$\% reduction.
\end{rqanswer}

\section{Discussion}
\label{sec:discussion}

\parh{Incompleteness.}
Exact maximum debloating is intractable at our target scale
(Challenge~C, \S\ref{sec:obstacles}). Our sound but incomplete analysis uses
interval, known-bits, and constant domains, template-based entailment rules,
and greedy removal. These polynomial-time approximations may miss redundant
checks, but every removal remains justified by provenance-guided proof search
over the surviving constraints. Missed removals affect optimization, not
soundness; our evaluation demonstrates substantial reductions on circuits with
tens of millions of constraints.\looseness=-1

\parh{Scope.}
Our tool targets cross-constraint entailment among soundness checks, which is
the dominant redundancy class in ML inference circuits. Algebraic
simplifications (constant folding, linear-combination merging, common
subexpression elimination) lie outside this scope; tools in
\S\ref{sec:prior-opt} already cover them. The two classes touch disjoint
constraints, so the passes can compose.\looseness=-1

\parh{Generality on Non-ML Domains.}
\reved{In principle, our method is general: it consumes only R1CS and PLONKish
constraint systems over $\Fp$, the generic formats that ZK frameworks emit
regardless of the source domain. Only the \emph{density} of redundancy is
ML-specific, since gadget-templated compilation replicates the range, sign, and
decomposition checks that drive our gains. The same checks recur beyond ML,
although our evidence covers only ML circuits:
signature-verification~\cite{circomecdsa} and bignum arithmetic
circuits~\cite{xjsnark} range-check limb decompositions in the same way, so they
are natural next targets. The achievable reduction there is likely smaller,
since field arithmetic and Merkle-path hashing, rather than such checks, account
for most of their constraints. Porting stays within the framework: a new domain
needs abstract domains and transfer functions matched to its invariants, which
we leave to future work.}\looseness=-1

\parh{Scalability on Complex ML Models.}
\reved{Modern ZK ML workloads increasingly target transformers and large
language models. These compile from the same operators as our benchmarks and
therefore emit the same checks, so we expect the method to extend to them.
Consistent with this, BERT-tiny yields $24.90$--$37.40$\% reduction
(\S\ref{sec:eval-rq1}), stable as the circuit scales $32\times$
(\S\ref{sec:eval-rq2}). We presume the trend continues beyond this scale, though
end-to-end evaluation is currently precluded by prover memory. It is a
bottleneck for ZK ML at large rather than for our analysis, with GPT-2 alone
requiring roughly 1\,TB of RAM~\cite{zkml}.}\looseness=-1

\section{Related Work}
\label{sec:related}


\parh{ZK machine learning and acceleration.} A broad literature authors circuits
for specific model families with custom gadgets and protocol choices:
zkCNN~\cite{zkcnn}, vCNN~\cite{vcnn}, and pvCNN~\cite{pvcnn} for CNNs;
zkLLM~\cite{sun2024zkllmzeroknowledgeproofs} for language models; and works on
decision-tree predictions~\cite{zkdecisiontree}, proofs of
training~\cite{zkpot}, and R1CS gadgets for NN primitives~\cite{Feng2021ZENEZ}.
A complementary line reduces proving cost without post-processing the
constraints: folding schemes~\cite{nova,hypernova} amortize repeated structure;
lookup arguments~\cite{plonkup,logup,lasso,jolt} offload nonlinearities to
tables; GKR-based protocols~\cite{libra, hyrax} prove layered computations via
sumcheck; and hardware work~\cite{gzkp,zkpog} lowers prover cost directly. These
choices are made at authoring time or per-operation; our debloater is a
post-pass that removes constraints already redundant by composition, and is
therefore orthogonal.\looseness=-1

\parh{DNN verification.} Neural-network verifiers use abstract interpretation
and bound propagation to prove input-output properties:
AI$^2$~\cite{gehr2018ai2} introduced abstract interpretation for certification;
Reluplex~\cite{katz2017reluplex} and Marabou~\cite{katz2019marabou} combine
Simplex with ReLU case splitting; ERAN~\cite{singh2019abstract, singh2018fast}
uses zonotope and polyhedral abstractions; and
$\alpha,\beta$-CROWN~\cite{zhang2018efficient, xu2021fast, wang2021beta}
combines linear bounds with branch-and-bound; and
DeepSRGR~\cite{yang2021spurious} tightens neuron bounds via LP. We share the use
of abstract domains, but our object is the compiled ZK constraint system
$\mathcal{C}$ over $\Fp$, with lookups, copy constraints, and decompositions
that have no direct NN analogue, and our question is constraint entailment:
whether $c$ follows from $\mathcal{C}\setminus\{c\}$, not whether a network
satisfies an input-output property.\looseness=-1

\parh{Program analysis on ZK circuits.}
Static analysis has been applied to ZK circuits primarily for soundness.
QED$^2$~\cite{picus} combines unique-constraint propagation with SMT to flag
\emph{under-constrained} circuits, finding genuine vulnerabilities in production
Circom; Coda~\cite{Liu2023CertifyingZC} uses a refinement-typed language to
uncover bugs in widely-used libraries; and
AC\textsuperscript{4}~\cite{ac4} and Ecne~\cite{ecne} detect
under-constraint patterns algebraically or by uniqueness propagation. These
tools ask whether the witness set of $\mathcal{C}$ is too large; we ask whether
individual constraints contribute to keeping it small. The analyses are
therefore complementary.\looseness=-1

\parh{Software debloating.}~\emph{Debloating} is borrowed from work that removes
unused code or runtime checks. Trimmer~\cite{trimmer}, RAZOR~\cite{razor}, and
Chisel~\cite{heo2018chisel} trim against configurations, representative traces,
or behavioral specifications, and BB-DSE~\cite{bardin2017backward} strips
infeasible branches in obfuscated binaries; SanRazor~\cite{jiang2021sanrazor}
and ASAP~\cite{asap} target redundant sanitizer checks. Closer in shape is
LClean~\cite{marcozzi2018time}, which prunes redundant test-coverage labels by
per-label SMT entailment (sound, incomplete). All these tools operate over
control-flow IRs. Our setting has no control flow and requires exact witness-set
preservation: each removal is a one-sided-sound entailment discharged via a
precomputed AND/OR provenance graph rather than per-candidate SMT, scaling to
$10^7$ constraints.\looseness=-1


\section{Conclusion}
\label{sec:conclusion}

We presented a sound debloater combining whole-circuit abstract interpretation
with provenance-guided removal. Across \texttt{ezkl} and \texttt{zkml} circuits
from four architecture families, it preserves the witness set while removing
25.5\% of constraints and reducing proving time by 33.4\% on average.
\looseness=-1

\begin{acks}
The HKUST authors were supported in part by a grant from the Research Grants
Council of the Hong Kong Special Administrative Region, China HKUST
(No.~R6005-25) and a RGC GRF grant under the contract 16214723.
\end{acks}


\appendix

\section{Open Science}
\label{sec:open-science}

{\emergencystretch=1em
Our complete Rust implementation and evaluation artifact are available at
\url{https://github.com/xuezhantong/zk-debloater}.\par}

\parh{Artifacts provided.}
The bundle contains the tool; model generators and halo2-to-JSON extractor; R1CS
corpus; prover/verifier sources; and drivers and results for every experiment in
\S\ref{sec:eval}.

\section{Ethical Considerations}
\label{sec:ethics}

This work involves no human subjects, personal data, or vulnerability
disclosure. Its static analysis removes only provably redundant constraints;
we foresee no negative ethical implications.

\section{Generative AI Usage}
\label{sec:generative}

We used OpenAI GPT and Anthropic Claude for proofreading, rephrasing,
formatting, debugging, and implementation suggestions. The authors reviewed and
verified all generated content and code for accuracy, correctness, and
originality.

\section{Proof Sketches}
\label{sec:appendix-proofs}

We sketch the complexity theorem of \S\ref{sec:provenance} and the soundness
theorems of \S\ref{sec:debloating}.

\begin{proof}[Proof sketch of Theorem~\ref{thm:prov-complexity}]%
\emergencystretch=1em
Each $(c,v)$ pair uses one $O(d)$ transfer replay and $O(d)$ premise trials of
$O(d)$ each, hence $O(d^2)$ per pair and $O(nd^3)$ overall. The graph bounds
follow from $|\mathcal{K}|=3$ and at most $d-1$ premises per constraint node.
\end{proof}

\begin{proof}[Proof sketch of Theorem~\ref{thm:prov-soundness}]
By induction on the proof tree of $f$ under $M_R$. Each internal node
$\nu=(c,(v,k),P)\notin M_R$ has $c\in\mathcal{C}\setminus R$, so $w$ satisfies
$c$. For an axiom node ($P=\emptyset$),
\Alg~\ref{alg:build-prov-graph} materializes precisely when
$\mathrm{transfer}_c(\top)(v)\Vdash(v,k)$. Since $\top$ over-approximates $w$,
local soundness makes $w(v)$ satisfy $k$. Inductively, smaller subtrees prove
that $w$ satisfies every $p\in P$. Thus $\mathit{state}(P)$ over-approximates
$w$; the replay that emitted $(v,k)$ (\Alg~\ref{alg:min-premises}) and local
soundness again imply that $w(v)$ satisfies $k$.
\end{proof}

\begin{proof}[Proof sketch of Theorem~\ref{thm:soundness}]
Let $c_1,\ldots,c_k$ witness the ordering. The forward inclusion is immediate.
For the reverse, take $w\models\mathcal{C}\setminus\mathcal{R}_i$.
By \Def~\ref{def:grounded}, every fact in $\Obl(c_i)$ is provable under
$M_{\mathcal{R}_i}$ and therefore holds by Theorem~\ref{thm:prov-soundness}.
Entailment soundness gives $w\models c_i$, hence
$w\models\mathcal{C}\setminus\mathcal{R}_{i-1}$. Iterating to
$\mathcal{R}_0=\emptyset$ yields $w\models\mathcal{C}$.
\end{proof}

\section{Illustrative Walkthrough}

\label{sec:walkthrough}

\begin{figure}[t]
\centering

\resizebox{\columnwidth}{!}{%
\begin{tikzpicture}[
  font=\scriptsize,
  >={Latex[length=1.4mm, width=1mm]},
  orfact/.style={draw, rounded corners, fill=blue!5,
                 inner xsep=2pt, inner ysep=1.5pt,
                 font=\scriptsize, minimum height=4mm},
  axand/.style={draw, circle, fill=blue!15,
                inner sep=0pt, minimum size=5mm,
                font=\tiny, text=blue!55!black},
  jnand/.style={draw, circle, fill=orange!25,
                inner sep=0pt, minimum size=5mm,
                font=\tiny, text=orange!75!black},
  axedge/.style={->, thin, blue!55!black},
  conedge/.style={->, thin, orange!75!black},
  preedge/.style={->, densely dashed, thin, black!55},
]

\def\dx{1.6}
\def\axY{1.25}
\def\jnY{-1.25}

\node[orfact] (zlo) at (0*\dx, 0) {$z\!\geq\!0$};
\node[orfact] (zhi) at (1*\dx, 0) {$z\!\leq\!5$};
\node[orfact] (xlo) at (2*\dx, 0) {$x\!\geq\!0$};
\node[orfact] (yhi) at (3*\dx, 0) {$y\!\leq\!10$};
\node[orfact] (xhi) at (4*\dx, 0) {$x\!\leq\!10$};
\node[orfact] (ylo) at (5*\dx, 0) {$y\!\geq\!0$};

\node[axand] (axRelo) at (0*\dx-0.38, \axY) {$c_{\mathsf{relu}}$};
\node[axand] (axRnlo) at (0*\dx+0.38, \axY) {$c_{\mathsf{rng}}$};
\node[axand] (axRehi) at (1*\dx-0.38, \axY) {$c_{\mathsf{relu}}$};
\node[axand] (axRnhi) at (1*\dx+0.38, \axY) {$c_{\mathsf{rng}}$};
\node[axand] (axXlo)  at (2*\dx,      \axY) {$c_{x}$};
\node[axand] (axYhi)  at (3*\dx,      \axY) {$c_{y}$};
\node[axand] (axXhi)  at (4*\dx,      \axY) {$c_{x}$};
\node[axand] (axYlo)  at (5*\dx,      \axY) {$c_{y}$};

\node[jnand] (jnxlo) at (2*\dx, \jnY) {$c_{\mathsf{sum}}$};
\node[jnand] (jnyhi) at (3*\dx, \jnY) {$c_{\mathsf{sum}}$};
\node[jnand] (jnxhi) at (4*\dx, \jnY) {$c_{\mathsf{sum}}$};
\node[jnand] (jnylo) at (5*\dx, \jnY) {$c_{\mathsf{sum}}$};

\draw[axedge] (axRelo) -- (zlo);
\draw[axedge] (axRnlo) -- (zlo);
\draw[axedge] (axRehi) -- (zhi);
\draw[axedge] (axRnhi) -- (zhi);
\draw[axedge] (axXlo)  -- (xlo);
\draw[axedge] (axYhi)  -- (yhi);
\draw[axedge] (axXhi)  -- (xhi);
\draw[axedge] (axYlo)  -- (ylo);

\draw[conedge] (jnxlo) -- (xlo);
\draw[conedge] (jnyhi) -- (yhi);
\draw[conedge] (jnxhi) -- (xhi);
\draw[conedge] (jnylo) -- (ylo);

\draw[preedge] (yhi.south) -- (jnxlo.north);
\draw[preedge] (xlo.south) -- (jnyhi.north);
\draw[preedge] (ylo.south) -- (jnxhi.north);
\draw[preedge] (xhi.south) -- (jnylo.north);

\end{tikzpicture}%
}

\par\smallskip
{\scriptsize
  \tikz[baseline=-0.5ex]{\node[draw, rounded corners=1pt,
       fill=blue!5, minimum width=4mm, minimum height=2.2mm, inner sep=0]{};}~OR-fact\enspace
  \tikz[baseline=-0.5ex]{\node[draw, circle, fill=blue!15,
       minimum size=2.4mm, inner sep=0]{};}~axiom AND\enspace
  \tikz[baseline=-0.5ex]{\node[draw, circle, fill=orange!25,
       minimum size=2.4mm, inner sep=0]{};}~joint AND\enspace
  \tikz[baseline=-0.5ex]{\draw[->, blue!55!black, semithick]
       (0,0) -- (0.45,0);}~axiom concl.\enspace
  \tikz[baseline=-0.5ex]{\draw[->, orange!75!black, semithick]
       (0,0) -- (0.45,0);}~joint concl.\enspace
  \tikz[baseline=-0.5ex]{\draw[->, densely dashed, black!55, semithick]
       (0,0) -- (0.45,0);}~premise
}

\par\smallskip
{\small \textit{(a) Phase~2: fixpoint provenance graph.}}

\vspace{3mm}

\resizebox{\columnwidth}{!}{%
\begin{tikzpicture}[
  font=\scriptsize,
  >={Latex[length=1.4mm, width=1mm]},
  orfact/.style={draw, rounded corners, fill=blue!5,
                 inner xsep=2pt, inner ysep=1.5pt,
                 font=\scriptsize, minimum height=4mm},
  goalfact/.style={draw, rounded corners, fill=green!22, thick,
                   inner xsep=2pt, inner ysep=1.5pt,
                   font=\scriptsize, minimum height=4mm},
  axand/.style={draw, circle, fill=blue!15,
                inner sep=0pt, minimum size=5mm,
                font=\tiny, text=blue!55!black},
  jnand/.style={draw, circle, fill=orange!25,
                inner sep=0pt, minimum size=5mm,
                font=\tiny, text=orange!75!black},
  subtree/.style={draw=green!55!black, fill=green!12,
                  thick, line cap=round},
  axedge/.style={->, thin, blue!55!black},
  conedge/.style={->, thin, orange!75!black},
  preedge/.style={->, densely dashed, thin, black!55},
  liveedge/.style={->, very thick, green!45!black},
  livepre/.style={->, densely dashed, very thick, green!45!black},
  mask/.style={red!75!black, ultra thick, line cap=round},
]

\def\dx{1.6}
\def\axY{1.25}
\def\jnY{-1.25}

\node[orfact] (zlo) at (0*\dx, 0) {$z\!\geq\!0$};
\node[orfact] (zhi) at (1*\dx, 0) {$z\!\leq\!5$};
\node[orfact] (xlo) at (2*\dx, 0) {$x\!\geq\!0$};
\node[orfact] (yhi) at (3*\dx, 0) {$y\!\leq\!10$};
\node[goalfact] (xhi) at (4*\dx, 0) {$x\!\leq\!10$};
\node[orfact] (ylo) at (5*\dx, 0) {$y\!\geq\!0$};

\node[axand] (axRelo) at (0*\dx-0.38, \axY) {$c_{\mathsf{relu}}$};
\node[axand] (axRnlo) at (0*\dx+0.38, \axY) {$c_{\mathsf{rng}}$};
\node[axand] (axRehi) at (1*\dx-0.38, \axY) {$c_{\mathsf{relu}}$};
\node[axand] (axRnhi) at (1*\dx+0.38, \axY) {$c_{\mathsf{rng}}$};
\node[axand] (axXlo)  at (2*\dx,      \axY) {$c_{x}$};
\node[axand] (axYhi)  at (3*\dx,      \axY) {$c_{y}$};
\node[axand] (axXhi)  at (4*\dx,      \axY) {$c_{x}$};
\node[axand] (axYlo)  at (5*\dx,      \axY) {$c_{y}$};

\node[jnand] (jnxlo) at (2*\dx, \jnY) {$c_{\mathsf{sum}}$};
\node[jnand] (jnyhi) at (3*\dx, \jnY) {$c_{\mathsf{sum}}$};
\node[jnand] (jnxhi) at (4*\dx, \jnY) {$c_{\mathsf{sum}}$};
\node[jnand] (jnylo) at (5*\dx, \jnY) {$c_{\mathsf{sum}}$};

\begin{scope}[on background layer]
\draw[subtree]
  plot [smooth cycle, tension=0.6] coordinates {
    ($(axYlo.north east) + (+4mm, +4mm)$)
    ($(ylo.south east)   + (+4mm, -4mm)$)
    ($(ylo.south west)   + (-1mm, -5mm)$)
    ($(jnxhi.south east) + (+5mm, -3mm)$)
    ($(jnxhi.south west) + (-4mm, -4mm)$)
    ($(xhi.north west)   + (-4mm, +4mm)$)
    ($(xhi.north east)   + (+1mm, +5mm)$)
  };
\end{scope}

\draw[axedge] (axRelo) -- (zlo);
\draw[axedge] (axRnlo) -- (zlo);
\draw[axedge] (axRehi) -- (zhi);
\draw[axedge] (axRnhi) -- (zhi);
\draw[axedge] (axXlo)  -- (xlo);
\draw[axedge] (axYhi)  -- (yhi);
\draw[axedge] (axXhi)  -- (xhi);
\draw[liveedge] (axYlo)  -- (ylo);

\draw[conedge] (jnxlo) -- (xlo);
\draw[conedge] (jnyhi) -- (yhi);
\draw[liveedge] (jnxhi) -- (xhi);
\draw[conedge] (jnylo) -- (ylo);

\draw[preedge] (yhi.south) -- (jnxlo.north);
\draw[preedge] (xlo.south) -- (jnyhi.north);
\draw[livepre] (ylo.south) -- (jnxhi.north);
\draw[preedge] (xhi.south) -- (jnylo.north);

\foreach \n in {axRelo, axRehi, axXlo, axXhi}{
  \draw[mask] (\n.north west) -- (\n.south east);
  \draw[mask] (\n.north east) -- (\n.south west);
}

\end{tikzpicture}%
}

\par\smallskip
{\scriptsize
  \tikz[baseline=-0.5ex]{\node[draw, circle, fill=blue!15,
       minimum size=2.4mm, inner sep=0]{};\draw[red!75!black, ultra thick, line cap=round]
       (-1.2mm,-1.2mm) -- (1.2mm,1.2mm);\draw[red!75!black, ultra thick, line cap=round]
       (-1.2mm,1.2mm) -- (1.2mm,-1.2mm);}~masked\enspace
  \tikz[baseline=-0.5ex]{\draw[draw=green!55!black, fill=green!12, thick]
       (0,0) ellipse (2.5mm and 1.4mm);}~subtree\enspace
  \tikz[baseline=-0.5ex]{\draw[->, green!45!black, very thick]
       (0,0) -- (0.45,0);}~live edge
}

\par\smallskip
{\small \textit{(b) Phase~3: trial of $c_x$ on top of $c_{\mathsf{relu}}$.}}

\caption{Provenance graph for the walkthrough (\S\ref{sec:walkthrough}).}
\Description{Two stacked panels showing the same six-fact provenance graph for
the walkthrough slice. Panel (a) is the fixpoint graph: six OR-facts in the
middle row, eight axiom AND-nodes above, four joint $c_{\mathsf{sum}}$ AND-nodes
below, with dashed premise edges forming two crossing-X cycles among the $x,y$
facts. Panel (b) overlays the Phase~3 trial: the four axiom AND-nodes for
$c_{\mathsf{relu}}$ and $c_x$ are struck through with red crosses, and a
green-tinted subtree encloses the surviving proof of $x\!\leq\!10$ via
$c_{\mathsf{sum}}$ and $c_y$.}
\Description{Two stacked panels showing the same six-fact provenance graph
  for the walkthrough slice. Panel (a) is the fixpoint graph: six OR-facts in
  the middle row, eight axiom AND-nodes above, and four joint sum-constraint
  AND-nodes below, with dashed premise edges forming two crossing cycles among
  the x and y facts. Panel (b) overlays the third-phase trial: the four axiom
  AND-nodes for the ReLU and x constraints are struck through, and a green
  subtree encloses the surviving proof of the upper bound on x through the sum
  and y constraints.}
\label{fig:walkthrough}
\end{figure}

We trace interval facts through five constraints distilled from
\S\ref{sec:motivating-example}: $c_{\mathsf{relu}}:z\in[0,10]$,
$c_{\mathsf{rng}}:z\in[0,5]$, $c_{\mathsf{sum}}:x+y=10$, and
$c_x:x\in[0,10]$, $c_y:y\in[0,10]$.
The $\{c_{\mathsf{relu}}, c_{\mathsf{rng}}\}$ and
$\{c_{\mathsf{sum}}, c_x, c_y\}$ subsystems share no variables.

\parh{Phase 1: Abstract interpretation.}
Intersecting $c_{\mathsf{relu}}$ and $c_{\mathsf{rng}}$ tightens $z$ to
$[0, 5]$; propagating a partner bound through $c_{\mathsf{sum}}$ yields no
tightening of $x$ or $y$ beyond their lookups. The fixpoint is
$\sigma^\star(z) = [0, 5]$ and $\sigma^\star(x) = \sigma^\star(y) = [0, 10]$.

\parh{Phase 2: Provenance graph.}
\F~\ref{fig:walkthrough}(a) shows the result. \emph{Axiom} nodes (blue)
record lookups deriving their fact unconditionally: $c_{\mathsf{rng}}$ alone
proves $z \le 5$; $c_x$ alone proves $0 \le x \le 10$; similarly for $c_y$.
\emph{Joint} nodes (orange) record $c_{\mathsf{sum}}$ deriving a fact from a
partner premise: $x + y = 10$ together with $y \ge 0$ yields $x \le 10$,
recorded as a premise edge from the $y \ge 0$ fact node into the joint node.
The four joint nodes form two 2-cycles among the $x, y$ facts.

\parh{Phase 3: Debloating.}
The filter admits the four lookups but rejects $c_{\mathsf{sum}}$, whose gate
rule requires constants. The loop removes $c_{\mathsf{relu}}$ because
$c_{\mathsf{rng}}$ proves $z\le10$, then retains $c_{\mathsf{rng}}$ because no
survivor targets $z$. It removes $c_x$ because $c_{\mathsf{sum}}$ and $c_y$
prove its obligations (\F~\ref{fig:walkthrough}(b)). Finally, proving $c_y$
would require $c_x$ and then $c_y$ itself; cycle detection
(\Alg~\ref{alg:proof-search}) therefore retains it.
The final removal set is $\mathcal{R} = \{c_{\mathsf{relu}}, c_x\}$.

\begingroup
\emergencystretch=1em
\section{Case Studies of Removed Constraints}
\label{sec:appendix-case-studies}

\S\ref{sec:motivation} sketches a small illustrative example of how such
redundancies arise in a ZK ML pipeline fragment. This appendix walks through
three real cases drawn from the benchmarks of Table~\ref{tab:rq1}, each
isolating a distinct removal mechanism; together they cover the patterns we
observe across both pipelines.

\subsection{Range subsumption (zkml/MNIST-MLP)}
\label{app:case-subsume}

Of the $791$ removed constraints, $788$ are \emph{lookups} and $3$ are gates.
Each removed lookup is a \emph{range proof} on a variable whose range is
already pinned to an at-least-as-tight interval by some other constraint in
the circuit. Three patterns recur on this benchmark and account for nearly
all $788$ lookup trims (Fig.~\ref{fig:range-subsume}).

\begin{figure}[h]
\centering
\begin{tikzpicture}[
  font=\footnotesize,
  box/.style={draw, rounded corners=1.5pt, inner sep=2.4pt, minimum height=4.4mm},
  redu/.style={draw, dashed, rounded corners=1.5pt, inner sep=2.4pt,
               text=gray!70, minimum height=4.4mm},
  arr/.style={->, >=stealth},
  caplab/.style={font=\scriptsize},
  lbl/.style={font=\itshape}
]
\node[box]                       (a1) {ReLU output lookup};
\node[redu, right=22mm of a1]    (a2) {range lookup};
\draw[arr] (a1) -- node[above, caplab]{$\mathit{act}\in[0, 128]$}
                   node[below, caplab, gray!70]{$\subset[0, 256]$} (a2);
\node[lbl, anchor=east] at ($(a1.west) + (-2mm, 0)$) {(a)};

\node[box, below=8mm of a1]      (b1) {ReLU output lookup};
\node[redu, right=22mm of b1]    (b2) {Packer lookup};
\draw[arr] (b1) -- node[above, caplab]{$x\in[0, 128]$}
                   node[below, caplab, gray!70]{$x{+}128\in[0, 256]$} (b2);
\node[lbl, anchor=east] at ($(b1.west) + (-2mm, 0)$) {(b)};

\node[box, below=11mm of b1]     (c1) {gate $b{=}16$};
\node[box, below=2.5mm of c1]    (c2) {$2b{-}r$ lookup};
\node[redu] (c3) at ($(c1.east)!0.5!(c2.east) + (3.4cm, 0)$) {$r$ lookup};
\draw[arr] (c1.east) -- ([yshift=1mm]c3.west);
\draw[arr] (c2.east) -- ([yshift=-1mm]c3.west);
\node[caplab,fill=white,inner sep=1pt] at ($(c1.east)!0.5!(c2.east) + (1.7cm, 0)$) {$\Rightarrow r\in[0, 32]$};
\node[lbl, anchor=east] at ($(c1.west)!0.5!(c2.west) + (-2mm, 0)$) {(c)};
\end{tikzpicture}
\caption{Three range-subsumption patterns. {Dashed boxes are the redundant
  lookups.}}
\Description{Three patterns: a ReLU bound subsumes a wider range lookup;
a ReLU bound subsumes a shifted Packer lookup; and a fixed divisor together
with a two-input lookup entails a remainder lookup. Dashed boxes mark removals.}
\label{fig:range-subsume}
\end{figure}

Pattern~(a). The ReLU gadget contains a lookup table that, by
construction, pins its output variable $\mathit{act}$ to $[0, 128]$.
Further downstream, the compiler emits a separate, generic range check
on the same variable against the wider interval $[0, 256]$. The first
lookup already implies the second, so the second is removable.

Pattern~(b). The ReLU gadget's output lookup pins the post-activation
variable $x$ to $[0, 128]$. zkml's polynomial Packer (the inter-layer
commitment gadget that runs after each ReLU) then commits to the same
$x$ by range-checking the shifted expression $x{+}128$ against
$[0, 256]$. The ReLU's existing bound already implies
$x{+}128 \in [128, 256] \subset [0, 256]$, so the Packer's check is
redundant.

Pattern~(c). Within a single VarDiv invocation, three constraints
jointly bound the remainder $r$: a gate fixing the divisor $b{=}16$, a
lookup on the multi-input expression $2b{-}r$, and a standalone lookup
on $r$. The first two together force $r \in [0, 32]$, so the standalone
$r$-lookup adds nothing. When the divisor's constant value is
independently provable (e.g., from a shared fixed cell that multiple
VarDiv invocations reach through copy chains), the divisor-fixing gate
itself becomes entailed and is dropped along with the lookup, removing
two constraints per such VarDiv rather than one.


\subsection{Activation-bound cascade (ezkl/BERT-tiny)}
\label{app:case-axiom}

On BERT-tiny ($2.87$\,M constraints after equality merging), the debloater
removes $715{,}517$ constraints, with gate trims ($404{,}924$) outnumbering
lookup trims ($310{,}593$). Because ezkl keeps model weights in private
advice columns, their values remain unknown to the analysis, and all
$309{,}696$ dot-product gates that compute the network's weighted sums
survive verbatim. The trims trace to one of two cascades, both seeded by an
upstream range bound on a variable~$x$ (Fig.~\ref{fig:bert-cascade}).

\begin{figure}[h]
\centering
\begin{tikzpicture}[
  font=\footnotesize,
  box/.style={draw, rounded corners=1.5pt, inner sep=2.4pt, minimum height=4.4mm},
  redu/.style={draw, dashed, rounded corners=1.5pt, inner sep=2.4pt,
               text=gray!70, minimum height=4.4mm},
  arr/.style={->, >=stealth},
  caplab/.style={font=\scriptsize},
  lbl/.style={font=\itshape}
]
\node[box]                       (a1) {bound on $x$};
\node[redu, right=24mm of a1]    (a2) {equals-zero gates};
\draw[arr] (a1) -- node[above, caplab]{forces gadget mask}
                   node[below, caplab, gray!70]{gates trivially satisfied} (a2);
\node[lbl, anchor=east] at ($(a1.west) + (-2mm, 0)$) {(a)};

\node[box, below=8mm of a1]      (b1) {bound on $x$};
\node[redu, right=24mm of b1]    (b2) {paired range lookup};
\draw[arr] (b1) -- node[above, caplab]{$x\in[-128, 128]$}
                   node[below, caplab, gray!70]{$\subset$ table range} (b2);
\node[lbl, anchor=east] at ($(b1.west) + (-2mm, 0)$) {(b)};
\end{tikzpicture}
\caption{Two cascades on ezkl BERT-tiny, both seeded by an upstream
range bound on the decomposed variable $x$.}
\Description{An upstream bound either fixes an equals-zero gadget mask,
trivializing its gates, or subsumes a paired range lookup.}
\label{fig:bert-cascade}
\end{figure}

Pattern~(a) accounts for $303{,}485$ of the gate trims. Each lookup-table
activation in ezkl is wrapped in a small equals-zero gadget that produces
a Boolean mask indicating whether the activation's input is zero, used
downstream by sign-handling logic. Once an upstream range check on $x$
either excludes zero or pins $x$ to zero, the mask collapses to a single
value, and the gadget's two enforcing gates reduce to trivial identities
(e.g., $x \cdot \mathit{mask} = 0$ becomes $0 = 0$ once either side is
pinned to zero). The entailment check then discharges both gates.

Pattern~(b) accounts for all $310{,}593$ lookup trims and the remaining
$101{,}439$ gate trims. ezkl decomposes each variable into sign and digit
components and range-checks the components in \emph{matched pairs} of
lookups. The upstream range bound on $x$ subsumes one lookup of each pair,
while the other survives as the lookup that defines the gadget. When the
subsumed lookup's input is additionally pinned to a single constant, the
lookup is first rewritten as an equality gate and then discharged; this
rewrite path accounts for the $101{,}439$ gate trims.

%

\subsection{Recurrent amplification (zkml/PTB-LSTM)}
\label{app:case-recurrent}

PTB-LSTM yields the largest reduction in our corpus: $15{,}240$ of $31{,}280$
post-merge constraints removed. The mechanisms match those of
\S\ref{app:case-subsume}; the amplifier is the LSTM cell's gadget
multiplicity, replicated across timesteps (Fig.~\ref{fig:lstm-unroll}).

\begin{figure}[h]
\centering
\begin{tikzpicture}[
  font=\footnotesize,
  cell/.style={draw, rounded corners=1.5pt, inner sep=3pt, align=center,
               minimum height=4.4mm},
  redu/.style={draw, dashed, rounded corners=1.5pt, inner sep=2.4pt,
               text=gray!70, align=left, minimum height=4.4mm},
  arr/.style={->, >=stealth},
  caplab/.style={font=\scriptsize}
]
\node[cell] (cell) {1 LSTM timestep:\\\footnotesize 7 VarDiv + 5 Packer};
\node[redu, right=12mm of cell, align=left] (trims)
     {VarDiv $r$ lookup + $b{=}c$ gate\\
      Packer commitment lookup\\
      terminal output range check};
\draw[arr] (cell) -- node[above, caplab]{per row}
                     node[below, caplab, gray!70]{$\times$ timesteps} (trims);
\end{tikzpicture}
\caption{Each LSTM timestep emits the same gadget infrastructure once
per row of the hidden state; the redundancies of \S\ref{app:case-subsume}
fire on every instance.}
\Description{Each LSTM timestep contains seven VarDiv and five Packer
instances. Their removable checks repeat over rows and timesteps.}
\label{fig:lstm-unroll}
\end{figure}

$11{,}200$ trims come from the intra-VarDiv pattern of
\S\ref{app:case-subsume}~(c), $4{,}000$ from the Packer-after-activation
pattern of \S\ref{app:case-subsume}~(b), and the remaining $40$ are
terminal output range checks. What survives across timesteps (the FC
dot-product gates, the gating multiplications, and the activation lookups
themselves) defines the model's actual arithmetic. Recurrent and gated
architectures reach the largest reductions in our corpus because they
unroll proportionally more redundant gadget infrastructure than
feed-forward ones, while the surviving computational core scales similarly.
\par
\endgroup

\section{Empirical Soundness Validation}
\label{sec:appendix-soundness}

Debloating soundness rests on two arguments: the pen-and-paper proof in
\S\ref{sec:approach}, which establishes that any committed removal preserves the
witness set, and the mechanized per-removal proof search built into the
debloater itself, which only commits to removing a constraint $c$ once the
AND/OR provenance graph exhibits an alternative proof for every fact $c$
supports. The study below is an independent \emph{cross-check}: it does not
extend those arguments, but it exercises the end-to-end pipeline (circuit
extraction, entailment heuristics, greedy commit, output rendering) with
adversarial witnesses drawn from boundary-violation and field wrap-around attack
models, to detect implementation bugs in the debloater.

\subsection{Test design}
\label{app:soundness-design}

For each pipeline-model pair we capture two snapshots: the pre-debloat
constraint set $C_{\mathit{orig}}$ and the post-debloat set $C_{\mathit{deb}}$.

\F~\ref{fig:soundness-fuzz} illustrates one iteration. We sample a removed
lookup $\ell \in R = C_{\mathit{orig}} \setminus C_{\mathit{deb}}$, pick an
advice cell $c$ that appears in $\ell$'s input expression, and set $w'[c]$ to an
adversarial value drawn from one of two regimes: an integer-boundary
perturbation just outside $\ell$'s range ($\mathit{lo}{-}1$ or
$\mathit{hi}{+}1$), or a field wrap-around value drawn from $\mathbb{F}_p
\setminus [\mathit{lo}, \mathit{hi}]$ (e.g., $p{-}1$, $p/2$, or $p{-}k$ for
small $k$, exercising the case where a witness satisfies an integer-range
constraint mod~$p$ but escapes the integer range itself). $w'$ agrees with the
original witness $w$ on every other cell. The original rejects $w'$ by
construction, since $\ell$ is violated. Soundness requires that some constraint
in $C_{\mathit{deb}}$ also rejects $w'$; if every $C_{\mathit{deb}}$ constraint
accepts $w'$, the debloater has unsoundly trimmed $\ell$.

We implement the loop in Rust. Each pipeline-model pair runs to a budget of 3
days. Sampling is random over the removed lookup set $R_{\mathit{lookup}}$ but
coverage-prioritized, so every removed lookup is tested at least once.

\parh{Limitations.} This testing pipeline is a sampling cross-check, not a
soundness proof:

(i)~\emph{Single-cell perturbations}: only one advice cell is modified per
iteration, so coordinated multi-cell witnesses that preserve gate equalities
while violating $\ell$ are out of scope.

(ii)~\emph{Sampling, not enumeration}: coverage is at-least-once per removed
lookup, not exhaustive across rows or attack patterns. We accept these gaps
because empirical fuzzing at this scale cannot enumerate $\mathbb{F}_p$;
soundness against worst-case adversaries is underwritten by the proofs above,
not by this study.

\begin{figure}[h]
\centering
\begin{tikzpicture}[
  font=\footnotesize,
  >=Stealth,
  every node/.style={inner sep=3pt},
  box/.style={draw, rounded corners=2pt, align=center,
              minimum width=2.3cm, minimum height=0.9cm}
]
  \node[box, fill=red!7, draw=red!50!black] (w)
       {adversarial witness $w'$\\(cell $c$ of $\ell \in R$ off-range)};

  \node[box, below left=0.7cm and 0.05cm of w] (orig)
       {$C_{\mathit{orig}}$\\(contains $\ell$)};
  \node[box, below right=0.7cm and 0.05cm of w] (deb)
       {$C_{\mathit{deb}}$\\(without $\ell$)};

  \draw[->] (w) -- (orig);
  \draw[->] (w) -- (deb);

  \node[below=0.25cm of orig, font=\footnotesize] (orig-r)
       {\textbf{rejects}\,(by constr.)};
  \node[below=0.25cm of deb, font=\footnotesize, align=center] (deb-r)
       {\textbf{rejects}? sound iff yes};

  \draw[->] (orig) -- (orig-r);
  \draw[->] (deb)  -- (deb-r);
\end{tikzpicture}
\caption{One validation iteration.}
\Description{A witness with one perturbed cell is checked against the original
and debloated circuits. The original rejects it by construction; the cross-check
looks for an unsound acceptance by the debloated circuit.}
\label{fig:soundness-fuzz}
\end{figure}

\subsection{Results}
\label{app:soundness-results}

Table~\ref{tab:soundness-fuzz} reports the results. Across all 13 pipeline-model
pairs we executed $3.62 \times 10^{11}$ adversarial iterations and observed zero
divergences between $C_{\mathit{orig}}$ and $C_{\mathit{deb}}$. Coverage of
$R_{\mathit{lookup}}$ was complete on every benchmark: each removed lookup was
tested at least once and, on the smaller zkml side, typically thousands of
times.

\begin{table}[h]
\centering
\caption{Empirical soundness validation on the RQ1 corpus.}
\label{tab:soundness-fuzz}
\small
\setlength{\tabcolsep}{5pt}
\begin{tabular}{l rrr}
\toprule
Model & Removed lookup count & Tests & Unsound \\
\midrule
\multicolumn{4}{l}{\emph{zkml}} \\
\quad MNIST-MLP    &      788   & $1.60{\times}10^{8}$  & 0 \\
\quad LeNet-5      &   9{,}378  & $5.10{\times}10^{9}$  & 0 \\
\quad PTB-RNN      &   3{,}550  & $1.97{\times}10^{8}$  & 0 \\
\quad PTB-GRU      &   9{,}000  & $3.52{\times}10^{10}$ & 0 \\
\quad PTB-LSTM     &   9{,}640  & $2.55{\times}10^{10}$ & 0 \\
\quad BERT-tiny    &   4{,}928  & $5.07{\times}10^{8}$  & 0 \\
\addlinespace[2pt]
\multicolumn{4}{l}{\emph{ezkl}} \\
\quad MNIST-MLP    &   2{,}755  & $4.91{\times}10^{10}$ & 0 \\
\quad LeNet-5      &  38{,}343  & $4.05{\times}10^{10}$ & 0 \\
\quad MobileNet-v2 & 871{,}475  & $5.08{\times}10^{10}$ & 0 \\
\quad PTB-RNN      &  63{,}201  & $3.14{\times}10^{10}$ & 0 \\
\quad PTB-GRU      & 210{,}201  & $2.52{\times}10^{10}$ & 0 \\
\quad PTB-LSTM     & 252{,}001  & $4.92{\times}10^{10}$ & 0 \\
\quad BERT-tiny    & 310{,}593  & $4.91{\times}10^{10}$ & 0 \\
\midrule
\textbf{Total} & \textbf{1{,}785{,}853} & $\mathbf{3.62{\times}10^{11}}$ & \textbf{0} \\
\bottomrule
\end{tabular}
\end{table}

\end{document}